\numberwithin{equation}{section}
\theoremstyle{plain}
\newtheorem{theo}{Theorem}[section]
\newtheorem{proposition}[theo]{Proposition}
\newtheorem{theorem}[theo]{Theorem}
\newtheorem{corollary}[theo]{Corollary}
\newtheorem{lemma}[theo]{Lemma}
\theoremstyle{definition}
\newcommand{\K}{{\mathbb K}}      \newcommand{\N}{{\mathbb N}}
\def\C{\mathbb{C}}
\newcommand{\one}{\mbox{$1 \hspace{-1.0mm}  {\bf l}$}}
\begin{document}

% \numberwithin{equation}{section}

%\pagestyle{headings}

\author{}

\title{Unbounded Bell violations for quantum genuine multipartite non-locality}

\author{Abderram\'an Amr $^1$}
\author{Carlos Palazuelos $^{1,3}$}
\author{Julio I. de Vicente$^2$}
\address[$^1$]{Departamento de An\'alisis Matem\'atico y Matem\'atica Aplicada, Universidad Complutense de Madrid, Plaza de Ciencias 3, 28040 Madrid, Spain}
\address[$^2$]{Departamento de Matem\'aticas, Universidad Carlos III de Madrid, Avda. de la Universidad 30, 28911 Legan\'es, Madrid, Spain}
\address[$^3$]{Instituto de Ciencias Matem\'aticas (ICMAT), C/ Nicol\'as Cabrera, Campus de Cantoblanco, 28049 Madrid, Spain}
\email[$^1$]{abamr@ucm.es}
\email[$^2$]{carlospalazuelos@mat.ucm.es}
\email[$^3$]{jdvicent@math.uc3m.es}

\maketitle

\begin{abstract}
The violations of Bell inequalities by measurements on quantum states give rise to the phenomenon of quantum non-locality and express the advantage of using quantum resources over classical ones for certain information-theoretic tasks. The relative degree of quantum violations has been well studied in the bipartite scenario and in the multipartite scenario with respect to fully local behaviours. However, the multipartite setting entails a more complex classification in which different notions on non-locality can be established. In particular, genuine multipartite non-local distributions apprehend truly multipartite effects, given that these behaviours cannot be reproduced by bilocal models that allow correlations among strict subsets of the parties beyond a local common cause. We show here that, while in the so-called correlation scenario the relative violation of bilocal Bell inequalities by quantum resources is bounded, i.e. it does not grow arbitrarily with the number of inputs, it turns out to be unbounded in the general case. We identify Bell functionals that take the form of non-local games for which the ratio of the quantum and bilocal values grows unboundedly as a function of the number of inputs and outputs.
\end{abstract}

\section{Introduction}

Bell's theorem establishes that correlations among the results of spatially separated measurements on composite quantum systems are incompatible  with  a  local  variable model \cite {Bell}. Besides its crucial implications in the foundations of quantum mechanics, this  phenomenon, known  as \emph{quantum non-locality}, is behind many important applications in quantum information theory, such as quantum cryptography \cite{Ekert91,ABGMPS,Vazirani14}, communication complexity \cite{CleveBuhrman97}, randomness expansion \cite{Pironio10} and randomness amplification \cite{Colbeck12,Gallego13}. The simplest scenario that enables quantum non-locality considers two isolated parties and has been thoroughly studied over the last three decades (see e.g.\ the review \cite{BCPSW}). However, multipartite scenarios have a greater complexity and offer a much richer source of correlations. The potential applications of these phenomena in the context of quantum networks or many-body physics has triggered notable interest in the study of quantum non-locality in the multipartite setting in the last years.

The most natural extension of the notion of locality to the multipartite domain is that of full locality, in which the only source of correlations among the parties is a local common cause. However, the verification of non-fully-local correlations does not necessarily imply non-locality shared among all parties as this resource distributed among just two parties can be enough to falsify these models. Thus, as it also happens in the study of quantum entanglement, a notion of genuine multipartite non-locality (GMNL) can be established in order to capture the idea that the non-local correlations must be truly shared among all parties. In particular, in his celebrated paper \cite{sve}, Svetlichny proved that there exist tripartite quantum correlations that cannot be reproduced by a larger class of models in which, in addition to the local variable, arbitrary (even signalling) correlations are allowed within a strict subset of the parties. This idea, which was extended to an arbitrary number of parties in \cite{Collins02,svegen}, leads to the definition of what we will call general bilocal models. The impossibility of such a model to reproduce certain correlations thus enables the verification of GMNL. Interestingly, this introduces a rich theoretical structure because in these hybrid models one can impose different conditions to the kind of correlations that are allowed within the subsets of parties \cite{Jones05,Bancal09}. In particular, as observed in \cite{Gallego12,Bancal13}, in certain scenarios allowing for signalling correlations in the definition of bilocality leads to ill-defined resource theories and grandfather-style paradoxes. A natural way to get rid of these problems is to consider bilocal models in which correlations among subsets of parties are required to be non-signalling (see e.g.\ \cite{Almeida10,Curchod19}). We will refer to these models as non-signalling bilocal. They introduce an alternative weaker notion of GMNL as non-signalling bilocal models are general bilocal but not necessarily the other way around.

One of the main aims in the study of non-locality from an information-theoretical perspective is to identify and quantify possible advantages in the use of quantum non-local correlations over local ones. Most of these potential protocols boil down mathematically to linear functionals acting on the space of joint conditional probability distributions for the parties and, hence, this leads to the investigation of quantum violations of Bell inequalities. Thus, a natural and often used quantity to understand the difference between local and quantum non-local behaviours in a resource-theoretic way is the relative ratio of violation optimized over all possible Bell inequalities, a question that is moreover related to classical problems in functional analysis. Motivated by the seminal work of Tsirelson \cite{Tsirelson}, a series of papers have been devoted to studying the asymptotic behaviour of this ratio and have shown that this study is very suitable to tackle different problems involving quantum non-locality \cite{BuhrmanRSW12, JungeP11low, JPPVW}. In particular, this finds immediate application to dimension witnesses, communication complexity and entangled games \cite{JPPVW2, PV-Survey}. In this context, the classical result of Tsirelson \cite{Tsirelson}, which states that the above quantity is upper bounded by a universal constant (Grothendieck's constant) when one considers two-output correlations independently of the number of inputs and the Hilbert space dimension, can be understood as a limitation of the advantages of quantum mechanics. This motivated the question, posed by Tsirelson himself, of whether a similar result was true for tripartite correlations and which was answered in \cite{PerezWPVJ08tripartite} in the negative comparing non-local and fully local behaviours. In this sense, tripartite quantum correlations lead to unbounded Bell violations over fully local ones, something that suggests, at least on the theoretical level, the idea of ``unlimited advantage''. Beyond the two-output correlation scenario, it is also known that Bell violations are unbounded for general bipartite probability distributions (see e.g.\ \cite{BuhrmanRSW12,JungeP11low}).

According to the more general notions of non-locality described above that arise in the multipartite scenario, one can wonder whether the main result in \cite{PerezWPVJ08tripartite} is still true in this context. Thus, the goal of this paper is to study the relative violation of Bell inequalities for quantum multipartite behaviours with respect to bilocal ones. Our first result is that in the two-output correlation scenario, contrary to the full-locality case and on the analogy of Tsirelson's result for the bipartite case, there is a universal constant which prevents from having unbounded violations irrespectively of the notion of bilocality used (general bilocal models boil down to non-signalling bilocal models in this case). Next, we consider the same question for general probability distributions. Our main result is that in this case quantum GMNL systems lead to unbounded Bell violations with respect to general bilocal models and, hence, with respect to all other bilocal models since this is the strongest notion of bilocality. In more detail, we provide an instance of a Bell functional, which happens to be a three-prover one-round game, for which the ratio of the quantum value over bilocal models grows with the number of inputs and ouputs. Interestingly, for this it is already enough to consider tripartite systems and, hence, the result extends trivially to an arbitrary number of parties. Thus, although most of our claims are easily generalized to general multipartite systems, we stick throughout the text to the tripartite case, which has the additional benefit of alleviating considerably the notation. The techniques we use are elementary and should be accessible to all readers familiar with quantum non-locality. It turns out that the constraints of the bipartite case can be naturally extended to study relative Bell violations in the genuine multipartite setting. In particular, the aforementioned games that we consider are built as tensor products of bipartite games in the flavour of parallel repetition.

\section{General definitions}

We will consider here the standard Bell scenario in which $k$ parties produce outputs $\{a_i\}_{i=1}^k$ upon receiving inputs $\{x_i\}_{i=1}^k$ according to the joint conditional probability distribution, also called behaviour,
\begin{equation}\label{behaviour}
(P(a_1,\cdots,a_k|x_1,\cdots , x_k))_{x_1,\ldots, x_k}^{a_1, \ldots, a_k}.
\end{equation}

For simplicity, we will assume that both the input and output alphabets have the same cardinality for all parties. We will denote by $N$ the number of possible inputs and by $K$ the number of possible outputs. A distribution (\ref{behaviour}) is said to be fully local if
\begin{align}\label{Def NL Tri}
P(a_1,\cdots,a_k|x_1,\cdots , x_k)=\sum_\lambda p_\lambda  P_1(a_1|x_1,\lambda)\cdots P_k(a_k|x_k,\lambda),
\end{align}
where $(p_\lambda)_\lambda$ denotes a probability distribution and $(P_i(a_i|x_i,\lambda))_{x_i,a_i}$ is a conditional probability distribution in the $i^{th}$ party for every value of the local variable $\lambda$. We will denote by $\mathcal{L}^k(N,K)$  the set of $k$-partite fully local probability distributions for $N$ inputs and $K$ outputs per party. On the other hand, bilocal behaviours admit a more general model of the form
\begin{align}\label{Def Sve}
&P(a_1,\cdots,a_k|x_1,\cdots , x_k)\\\nonumber&=\sum_{M,\lambda}p_{M,\lambda}P_M((a_i)_{i\in M}|(x_i)_{i\in M},\lambda)P_{\bar{M}}((a_i)_{i\in \bar{M}}|(x_i)_{i\in \bar{M}},\lambda),
\end{align}
where $M$ runs over all strict non-empty subsets of $\{1,\cdots, k\}$, $\bar{M}=\{1,\cdots, k\}\setminus M$, $(p_{M,\lambda})_\lambda$ is a probability distribution $\forall M$ and for each $M$, $P_M((a_i)_{i\in M}|(x_i)_{i\in M},\lambda)$ and $P_{\bar{M}}((a_i)_{i\in \bar{M}}|(x_i)_{i\in \bar{M}},\lambda)$ are conditional probability distributions on the parties in $M$ and in $\bar{M}$ respectively for all values of the local variable $\lambda$. If no restriction is added to these conditional probability distributions for the subsets of parties, we have Svetlichny's notion of bilocality. We will denote the set of such behaviours by $\mathcal{BL}^k_{\mathcal{G}}(N,K)$ and we will refer to them as general bilocal. If, on the other hand, the conditional probability distributions for the subsets are required to be non-signalling, we will refer to these behaviours as non-signalling bilocal and we will denote the corresponding set by $\mathcal{BL}^k_{\mathcal{NS}}(N,K)$. We recall that non-signalling conditional probability distributions are such that each party's marginal conditional probability distribution is independent of the other parties' inputs. That is, a bipartite conditional probability distribution $P(a,b|x,y))_{x,y}^{a,b}$ is said to be non-signalling if it verifies
\begin{align}\label{nonsignalling}
\sum_{a=1}^KP(a,b|x,y)=\sum_{a=1}^KP(a,b|x'y)\text{ for all }a,b,y,x\neq x',\\
\nonumber \sum_{b=1}^KP(a,b|x,y)=\sum_{b=1}^KP(a,b|xy')\text{ for all }a,b,x,y\neq y'.
\end{align}

These two conditions can be generalized in the obvious way to any number of parties and we will denote by $\mathcal{NS}^k(N, K)$ the set of $k$-partite non-signalling probability distributions with $N$ inputs and $K$ outputs per party (see \cite[Definition 1]{LW} for the explicit statement). Finally, a behaviour (\ref{behaviour}) is quantum if
\begin{align}\label{Def quantum}
P(a_1,\cdots,a_k|x_1,\cdots , x_k)=\langle \psi|E_{a_1,x_1}^{1}\otimes \cdots \otimes E_{a_k,x_k}^{k}|\psi\rangle,
\end{align}
where $(E_{a_i,x_i}^{i})_{x_i, a_i}$ is a family of measurements for the i$^{th}$-party (that is, for each party $i$ $E_{a_i,x_i}^{i}\geq0$ $\forall a_i,x_i$ and $\sum_{a_i}E_{a_i,x_i}^{i}=\one$ $\forall x_i$) and $|\psi\rangle$ is a $k$-partite pure quantum state. We will denote the set of behaviours of this form by $\mathcal{Q}^k(N,K)$.

There are several known relations among these sets. For instance, $\mathcal{L}^k(N,K)\subsetneq \mathcal{Q}^k(N,K)\subsetneq \mathcal{NS}^k(N,K)$. The first strict inclusion is the content of Bell's theorem while the second is due to Tsirelson \cite{Tsirelson} and Popescu and Rohrlich \cite{PR}. It readily follows from the definitions that $\mathcal{L}^k(N,K)\subset \mathcal{BL}^k_{\mathcal{NS}}(N,K)\subset \mathcal{BL}^k_{\mathcal{G}}(N,K)$. Svetlichny's result states that $\mathcal{Q}^k(N,K)\nsubseteq \mathcal{BL}^k_{\mathcal{G}}(N,K)$. Notice that it also holds that $\mathcal{BL}^k_{\mathcal{NS}}(N,K)\nsubseteq \mathcal{Q}^k(N,K)$.

Given any linear functional $M$ acting on the set of $k$-partite joint conditional probability distributions characterized by real numbers $\{M_{x_1 \ldots x_k}^{a_1 \ldots a_k}\}$ we will write
$$\langle M,P\rangle=\sum_{a_1, \ldots, a_k}\sum_{x_1, \ldots, x_k}M_{x_1 \ldots x_k}^{a_1 \ldots a_k}P(a_1,\cdots,a_k|x_1,\cdots , x_k).$$

Many information-theoretic problems boil down to optimizing a linear functional over particular sets of behaviours and, as explained in the introduction, a good way of understanding the relative power as resources of two such sets is to consider the relative ratio of violation optimized over all linear functionals. More precisely, if $\mathcal A_1$ and $\mathcal A_2$ are certain sets of behaviours like those defined above and $M$ is a linear functional on them, we define $\omega_{\mathcal{A}_i}(M)=\sup_{P\in\mathcal{A}_i}|\langle M,P\rangle|$, $i=1,2$ and also
\begin{align}\label{Larg viol}
LV(\mathcal A_1, \mathcal A_2)=\sup_M \frac{\omega_{\mathcal{A}_1}(M)}{\omega_{\mathcal{A}_2}(M)}.
\end{align}
This quantity will be our major object of study here in order to compare $\mathcal{Q}^k(N,K)$ with the sets of bilocal behaviours. As discussed in the introduction, Eq.\ (\ref{Larg viol}) gives a quantitative notion of the relative power as a resource of the behaviours in $\mathcal{A}_1$ compared to those in $\mathcal{A}_2$. This quantification is particularly clear when the Bell inequality $M$ is a two-prover one-round game, where Eq. (\ref{Larg viol}) is exactly the quotient of the winning probability of the game by using strategies defined by $\mathcal {A}_1$ over the the winning probability of the game by using strategies defined by $\mathcal A_2$.

We note that, in order for Eq.\ (\ref{Larg viol}) to make sense, we must require that the set $\mathcal A_1$ is contained in the affine hull of the set $\mathcal{A}_2$ (otherwise, we could find examples where  $0=\omega_{\mathcal{A}_2}(M)<\omega_{\mathcal{A}_1}(M)$, so the quotient is infinity) and we must also define $0/0=0$ if we want to allow for general Bell functionals $M$ in the equation. We refer to \cite[Section 5]{JungeP11low} for a complete study of the geometric interpretation of Eq.\ (\ref{Larg viol}) in the bipartite case. On the other hand, this restriction is not needed anymore if we restrict to the case of correlations (see the following section) or to Bell functionals $M$ with positive coefficients (in particular, two-prover one-round games), since in both cases $\omega_{\mathcal A}(M)>0$ for every $M$, for all the sets $\mathcal A$ we will consider. 
\section{Correlation scenario}\label{Sec: Bi-local correlations}

A particularly simple setting that has been thoroughly studied in the literature is the so-called correlation scenario. This arises when all outputs are binary $a_i\in \{-1,1\}$ $\forall i$ (i.e.\ $K=2$) and, instead of considering the full joint conditional probability distribution (\ref{behaviour}), only correlations -- that is, expectations over the product of the outputs -- are considered. Then, we define the correlation associated to a behaviour (\ref{behaviour}), $$\gamma=(\gamma_{x_1\ldots x_k})_{x_1,\ldots, x_k}\in\mathbb{R}^{N^k},$$ as
\begin{align*}
\gamma_{x_1\ldots x_k}&=\mathbb{E}[a_1 \ldots a_k|x_1,\ldots, x_k]=\sum_{a_1, \ldots, a_k}a_1\cdots a_k P(a_1, \ldots, a_k|x_1,\ldots, x_k)\\&=P(a_1\cdots a_k{=}{1}|x_1,\ldots, x_k)-P(a_1\cdots a_k{=}{-}1|x_1,\ldots, x_k).
\end{align*}

We will say that a certain correlation is local (resp. quantum, non-signalling bilocal, general bilocal and non-signalling) if there exists a local (resp. quantum, non-signalling bilocal, general bilocal and non-signalling) joint conditional probability distribution (\ref{behaviour}) such that $\gamma$ is the correlation associated to it. In this way, we will denote, in correspondence with the definitions of the previous section, $\mathcal{L}_{cor}^k(N)$ (resp. $\mathcal{Q}_{cor}^k(N)$, $\mathcal{BL}^k_{cor, \mathcal{NS}}(N)$, $\mathcal{BL}^k_{cor, \mathcal{G}}(N)$ and $\mathcal{NS}_{cor}^k(N)$) the set of local (resp. quantum, non-signalling bilocal, general bilocal and non-signalling) correlations with $N$ inputs per party.

It is well known that a given correlation $\gamma$ is in $\mathcal{L}_{cor}^k(N)$ if
$$\gamma \in \text{conv}\{(a_{x_1}\cdots a_{x_k})_{x_1,\ldots, x_k}: a_{x_i}=\pm 1, x_i=1,\cdots, N, i=1,\cdots , k\},$$where $\text{conv}$ denotes the convex hull. On the other hand, $\gamma$ is in $\mathcal{Q}_{cor}^k(N)$ if
\begin{align*}
\gamma_{x_1\ldots x_k}=\langle \psi|A^1_{x_1}\otimes \ldots \otimes A^k_{x_k}|\psi\rangle \hspace{0.2 cm}\text{  for every $x_1,\ldots, x_k$,}
\end{align*}where $A^i_{x_i}$ is a norm-one selfadjoint operator for every $i$ and for every $x_i=1,\ldots, N$ and $|\psi\rangle$ is a $k$-partite pure quantum state.

A characterization of the set $\mathcal{NS}_{cor}^k(N)$ can be done by the following lemma (see \cite[Proposition 1,2]{communicationcomplexityns} for the proof in the case $k=2$)
\begin{lemma}\label{nscorrelation}
Given a correlation $\gamma$, it is in $\mathcal{NS}_{cor}^k(N)$ if and only if $|\gamma_{x_1\ldots x_k}|\leq 1$ for every $x_1,\ldots, x_k$.
\end{lemma}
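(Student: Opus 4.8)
The plan is to prove both implications directly. The "only if" direction is immediate: if $\gamma$ is the correlation associated to a non-signalling behaviour $P\in\mathcal{NS}^k(N,2)$, then since $\gamma_{x_1\ldots x_k}=P(a_1\cdots a_k{=}1|x_1,\ldots,x_k)-P(a_1\cdots a_k{=}{-}1|x_1,\ldots,x_k)$ is a difference of two non-negative numbers that sum to $1$, we automatically get $|\gamma_{x_1\ldots x_k}|\le 1$ for all inputs. (Note this bound in fact holds for any behaviour whatsoever, signalling or not; non-signalling is not needed here.)

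The substance is the "if" direction: given any array $\gamma\in\mathbb{R}^{N^k}$ with $|\gamma_{x_1\ldots x_k}|\le 1$, I must construct a non-signalling behaviour $P$ realizing it. The key idea is to use the fully uniform marginals: define
\begin{align*}
P(a_1,\ldots,a_k|x_1,\ldots,x_k)=\frac{1}{2^k}\Bigl(1+a_1\cdots a_k\,\gamma_{x_1\ldots x_k}\Bigr).
\end{align*}
First I would check this is a valid conditional probability distribution: positivity follows from $|a_1\cdots a_k\,\gamma_{x_1\ldots x_k}|=|\gamma_{x_1\ldots x_k}|\le 1$, and normalization $\sum_{a_1,\ldots,a_k}P=1$ follows because $\sum_{a_i=\pm1}a_i=0$ kills the second term (there are $2^k$ terms each contributing $1/2^k$ from the constant part). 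Next I would verify it reproduces the right correlation: $\sum_{a_1,\ldots,a_k}a_1\cdots a_k\,P=\frac{1}{2^k}\sum (a_1\cdots a_k)+\frac{1}{2^k}\gamma_{x_1\ldots x_k}\sum(a_1\cdots a_k)^2=0+\frac{1}{2^k}\gamma_{x_1\ldots x_k}\cdot 2^k=\gamma_{x_1\ldots x_k}$, using $(a_1\cdots a_k)^2=1$. Finally, the non-signalling check: fixing any proper subset of parties and summing over the outputs of the complementary parties, the term $a_1\cdots a_k$ again sums to zero (summing over even a single $a_i$ with $i$ in the complement gives $\sum_{a_i=\pm1}a_i=0$), so every marginal equals $\frac{1}{2^{|S|}}$ regardless of the other parties' inputs, which is the non-signalling condition (the multipartite generalization of \eqref{nonsignalling}).

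I do not anticipate a genuine obstacle here; the construction is the natural "white noise plus correlation" state and all verifications are short sign computations. The one point requiring a little care is phrasing the non-signalling condition for $k$ parties correctly (it must hold for every proper subset $S$ and every choice of inputs for the parties in $\bar S$), but since the marginal on any $S\subsetneq\{1,\ldots,k\}$ comes out identically uniform, independent of all inputs, the condition is satisfied trivially. This also recovers the $k=2$ case of \cite{communicationcomplexityns} as stated.
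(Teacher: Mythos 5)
Your construction $P=\frac{1}{2^k}\bigl(1+a_1\cdots a_k\,\gamma_{x_1\ldots x_k}\bigr)$ is exactly the distribution used in the paper's proof (there written as the case split $\frac{1\pm\gamma_{x_1\ldots x_k}}{2^k}$ according to the sign of $a_1\cdots a_k$), and your verifications of normalization, the correlation value, and the non-signalling condition via uniform marginals match the paper's argument. The proposal is correct and takes essentially the same approach.
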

\begin{proof}
Proving that $|\gamma_{x_1\ldots x_k}|$ is less than or equal to 1 if it is in $\mathcal{NS}_{cor}^k(N)$ follows from the definition. For the other implication, given a correlation $\gamma_{x_1,\ldots,x_k}$ satisfying $|\gamma_{x_1\ldots x_k}|\leq 1$ for every $x_1,\ldots, x_k$, consider the probability distribution defined as:
$$P(a_1,\ldots,a_k|x_1,\ldots,x_k)=\left\{
\begin{array}{ccc}
\frac{1+\gamma_{x_1\ldots x_k}}{2^k}& \text{if} & a_1a_2\ldots a_k=1,\\
\frac{1-\gamma_{x_1\ldots x_k}}{2^k}& \text{if} & a_1a_2\ldots a_k=-1.
\end{array}
\right.$$

This probability distribution can be easily seen to be in $\mathcal{NS}^k(N,2)$ as a consequence that for all $i$ we have
\begin{align*}
\sum_{a_i}&P(a_1,\ldots,a_k|x_1,\ldots,x_k)\\
&=P(a_1,\ldots,\underbrace{1}_i,\ldots,a_k|x_1,\ldots,x_k)+P(a_1,\ldots,\underbrace{-1}_i,\ldots,a_k|x_1,\ldots,x_k)\\
&=\frac{1\pm\gamma_{x_1,\ldots,x_k}}{2^k}+\frac{1\mp\gamma_{x_1,\ldots,x_k}}{2^k}=\frac{1}{2^{k-1}}.
\end{align*}
\end{proof}
In particular, notice that Lemma \ref{nscorrelation} implies that correlations associated to nonsignalling probability distributions are the same as correlations associated to general probability distributions. So we have $$\mathcal{BL}^k_{cor, \mathcal{NS}}(N)=\mathcal{BL}^k_{cor, \mathcal{G}}(N).$$ Let us then just denote $\mathcal{BL}^k_{cor}(N)$ in this case.

In the following proposition we characterize correlations associated to bilocal probability distributions. Since we will only consider the case of tripartite distributions, we will restrict to this case. However, this result generalizes to an arbitrary number of parties trivially.

\begin{proposition}\label{Prop correlation bilocal}
A correlation $(\gamma_{xyz})_{x,y,z}$ is in $\mathcal{BL}^3_{cor}(N)$ if and only if
$$\gamma\in \text{conv}\big\{(\alpha_{xy}c_z)_{x,y,z}, \, (\beta_{yz}a_x)_{x,y,z}, \, (\gamma_{xz}b_y)_{x,y,z}\big\},$$
where $(\alpha_{xy})_{x,y}$, $(\beta_{yz})_{y,z}$, $(\gamma_{xz})_{x,z}$ are elements in $\mathcal{NS}_{cor}^2(N)$ and $a_x,b_y,c_z=\pm1$ for every $x,y,z$.
\end{proposition}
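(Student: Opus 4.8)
The plan is to reduce both inclusions to a comparison of the generators of the two convex sets at hand. The structural starting point is that, by definition, a tripartite bilocal distribution is a convex combination of product distributions across bipartitions: in \eqref{Def Sve} each summand $P_M(\cdot|\cdot,\lambda)P_{\bar M}(\cdot|\cdot,\lambda)$ is, for fixed $(M,\lambda)$, a (conditional) product distribution across the bipartition $M|\bar M$, and for $k=3$ the six strict non-empty subsets $M$ yield only the three bipartitions $\{1\}|\{2,3\}$, $\{2\}|\{1,3\}$, $\{3\}|\{1,2\}$. General and non-signalling bilocality coincide at the level of correlations (as noted after Lemma~\ref{nscorrelation}), so there is a single set $\mathcal{BL}^3_{cor}(N)$ to consider, and since $P\mapsto\gamma$ is linear it is exactly the convex hull of the correlations of bipartition-product distributions. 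Hence it suffices to check the two inclusions on these generators.

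For ``$\subseteq$'' I would take a bipartition-product, say $P(a,b,c|x,y,z)=P_1(a|x)P_{23}(b,c|y,z)$ across $\{1\}|\{2,3\}$ (the other two cases being symmetric), and factor its correlation as $\gamma_{xyz}=a_x\,\alpha_{yz}$ with $a_x:=\sum_a aP_1(a|x)$ and $\alpha_{yz}:=\sum_{b,c}bc\,P_{23}(b,c|y,z)$. Here $a_x\in[-1,1]$ and, being an expectation of a $\pm1$-valued quantity, $|\alpha_{yz}|\le1$, whence $\alpha\in\mathcal{NS}_{cor}^2(N)$ by Lemma~\ref{nscorrelation}; note this uses nothing about $P_{23}$ being non-signalling, which is exactly why the general and non-signalling bilocal notions agree here. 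Writing the vector $(a_x)_{x=1}^N\in[-1,1]^N=\text{conv}\{\pm1\}^N$ as $\sum_s q_s s$ over sign vectors $s\in\{\pm1\}^N$ gives $\gamma=\sum_s q_s\,(s_x\alpha_{yz})_{x,y,z}$, a convex combination of generators of the second family in the statement; the bipartitions $\{3\}|\{1,2\}$ and $\{2\}|\{1,3\}$ produce generators of the first and third families respectively (the isolated party fixes which coordinate carries the sign). A final convex combination over $(M,\lambda)$ then lands any $\gamma\in\mathcal{BL}^3_{cor}(N)$ in the claimed convex hull.

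For ``$\supseteq$'' it suffices, again by convexity and linearity of the correlation map, to realize a single generator, say $(\beta_{yz}a_x)_{x,y,z}$ with $\beta\in\mathcal{NS}_{cor}^2(N)$ and $a_x=\pm1$. Let party $1$ output $a_x$ deterministically on input $x$, i.e.\ $P_1(a|x)=1$ if $a=a_x$ and $0$ otherwise (marginal correlation $a_x$), and let $P_{23}$ be the explicit non-signalling distribution from the proof of Lemma~\ref{nscorrelation} in the case $k=2$, namely $P_{23}(b,c|y,z)=(1+\beta_{yz})/4$ or $(1-\beta_{yz})/4$ according as $bc=1$ or $bc=-1$ (well-defined since $|\beta_{yz}|\le1$), which has correlation $\beta_{yz}$. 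Then $P_1\otimes P_{23}$ is a bilocal distribution across $\{1\}|\{2,3\}$ — in fact non-signalling bilocal — with correlation $a_x\beta_{yz}$; the other two families are realized by the symmetric constructions, and convexity of $\mathcal{BL}^3_{cor}(N)$ finishes the argument.

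The whole proof is elementary and I do not expect a genuine obstacle. The two points that need a little care are: (i) the convexity reductions, which let the single-party factors be taken to be honest signs $\pm1$ rather than arbitrary numbers in $[-1,1]$ (this is the identity $[-1,1]^N=\text{conv}\{\pm1\}^N$); and (ii) the two-sided use of Lemma~\ref{nscorrelation} to identify the bipartite correlations that appear — whether or not they originate from signalling distributions — with $\mathcal{NS}_{cor}^2(N)$, i.e.\ with the arrays whose entries all lie in $[-1,1]$. Beyond that it is just bookkeeping of the three symmetric bipartitions.
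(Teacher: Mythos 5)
Your proof is correct and follows essentially the same route as the paper's: reduce to extremal bipartition-product distributions, factor the correlation, and invoke Lemma~\ref{nscorrelation} to identify the bipartite factor with an element of $\mathcal{NS}_{cor}^2(N)$. You are in fact slightly more thorough than the paper, which only details the forward inclusion and leaves implicit both the decomposition of the single-party factor over sign vectors (via $[-1,1]^N=\text{conv}\{\pm1\}^N$) and the converse construction that you carry out explicitly.
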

\begin{proof}
An extremal probability distribution $P$ of the set $\mathcal{BL}^3_{\mathcal{G}}(N,2)$ will have one of the following forms:
\begin{align}\label{3 cases}
(Q(a,b|x,y)R(c|z))_{x,y,z}^{a,b,c},\, (Q(b,c|y,z)R(a|x)))_{x,y,z}^{a,b,c},\, (Q(a,c|x,z)R(b|y)))_{x,y,z}^{a,b,c},
\end{align}where in all cases $Q$ and $R$ are general probability distributions. Let us assume that this extremal element has the form $P=(Q(a,b|x,y)R(c|z))_{x,y,z}^{a,b,c}$ and denote $\beta$ and $\alpha$ the corresponding correlations from the probability distributions $(Q(a,b|x,y))_{x,y}^{a,b}$ and $(R(c|z))_{z}^c$, respectively. Then, given $x,y,z$, we have
\begin{align*}
\gamma_{xyz}&=\mathbb{E}[a\cdot b\cdot c|x,y,z]=\sum_{a,b,c}abc P(a,b,c|x,y,z)=\sum_{a,b,c}abc Q(a,b|x,y)R(c|z)\\
&=\Big(\sum_{a,b}ab Q(a,b|x,y)\Big)\Big(\sum_cc R(c|z)\Big)=\mathbb{E}[a\cdot b|x,y]\mathbb{E}[c|z]=\beta_{xy}\alpha_{z}.
\end{align*}

By definition, $(\beta_{xy})_{xy}$ is in $\mathcal{NS}^2_{cor}$ whenever $Q$ is in $\mathcal{NS}^2$ and, clearly, $|\mathbb{E}[c|z]|\leq 1$. Since the other two cases in (\ref{3 cases}) are completely analogous, the result follows by convexity.
\end{proof}

We are now in the position to address the main aim of this section: characterize the relative Bell violations for the quantum and bilocal sets in the correlation scenario using the figure of merit defined in Eq.\ (\ref{Larg viol}). Before presenting these results, let us first briefly recall several known results in this direction. The result of Tsirelson (\cite{Tsirelson}) mentioned in the introduction states that
\begin{align}\label{Tsirelosn}
LV\big(\mathcal{Q}_{cor}^2(N), \mathcal{L}_{cor}^2(N)\big)\leq K_G,
\end{align}for every number of inputs $N$, where $K_G$ denotes the real Grothendieck's constant. On the other hand, it is not difficult to show (see for instance \cite[Ex. 29]{LCPW}) that
\begin{align}\label{Eq. upper bound NS-CL}
LV\big(\mathcal{NS}_{cor}^2(N), \mathcal{L}_{cor}^2(N)\big)\leq \sqrt{2N},
\end{align}and that this upper bound is essentially optimal. That is, there exits $M_0\in\mathbb{R}^{N^2}$ such that $\omega_{\mathcal{NS}_{cor}^2(N)}(M_0)\geq \sqrt{N/2}\omega_{\mathcal{L}_{cor}^2(N)}(M_0)$. As mentioned in the introduction, the relative violation between the quantum and the fully local set turns out to be unbounded as shown in \cite{PerezWPVJ08tripartite}. Later, the estimates proved therein were improved in \cite{briet11}, by showing that
\begin{align*}
LV\big(\mathcal{Q}_{cor}^3(N), \mathcal{L}_{cor}^3(N)\big)\geq CN^{\frac{1}{4}},
\end{align*} where $C$ is a universal constant. Moreover, it is known that this estimate is not far from being optimal, since the following inequality holds for every $N$ (see \cite{briet11}):
\begin{align*}
LV\big(\mathcal{Q}_{cor}^3(N), \mathcal{L}_{cor}^3(N)\big)\leq C\sqrt{N},
\end{align*}where $C$ is a universal constant. Note that, throughout this work, $C$ will always denote a general constant, not necessarily the same one each time that it appears. 

It turns out that quantum behaviours cannot lead to an unbounded violation with respect to bilocal behaviours in the correlation scenario, i.e. it cannot hold that  $$\lim_{N\rightarrow \infty}LV\big(\mathcal{Q}_{cor}^3(N), \mathcal{BL}^3_{cor}(N)\big)=\infty.$$ In the following proposition, we show that Tsirelson's result already prevents from having such a behaviour (see item (\ref{lv2}) in the following result). We also analyze the ratio between the set of bilocal correlations and the sets of fully local and quantum correlations.

\begin{proposition}
Given $N$, the following inequalities hold:
\begin{enumerate}
\item \label{lv2} $LV\big(\mathcal{Q}_{cor}^3(N), \mathcal{BL}^3_{cor}(N)\big)\leq K_G$.
\item \label{lv3} $LV\big(\mathcal{BL}^3_{cor}(N), \mathcal{L}_{cor}^3(N)\big)\leq \sqrt{2N}$. This implies $LV\big(\mathcal{BL}^3_{cor}(N), \mathcal{Q}_{cor}^3(N)\big)\leq  \sqrt{2N}$. Moreover, the order $\sqrt{N}$ is optimal in these inequalities, since, in particular, $LV\big(\mathcal{BL}^3_{cor}(N), \mathcal{Q}_{cor}^3(N)\big)\geq \sqrt{N}/(4K_G)$.
\end{enumerate}
\end{proposition}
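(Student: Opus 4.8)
The plan is to reduce everything to known bipartite estimates via the structure of $\mathcal{BL}^3_{cor}(N)$ given by Proposition \ref{Prop correlation bilocal}.

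For item (\ref{lv2}), I would first observe that, since $\mathcal{L}^3_{cor}(N) \subset \mathcal{BL}^3_{cor}(N)$, any Bell functional $M$ satisfies $\omega_{\mathcal{BL}^3_{cor}(N)}(M) \geq \omega_{\mathcal{L}^3_{cor}(N)}(M)$, so it suffices to show $\omega_{\mathcal{Q}^3_{cor}(N)}(M) \leq K_G \, \omega_{\mathcal{BL}^3_{cor}(N)}(M)$. The key step is to bound $\omega_{\mathcal{Q}^3_{cor}(N)}(M)$ by grouping two of the three parties together: a tripartite quantum correlation $\gamma_{xyz} = \langle\psi| A^1_x \otimes A^2_y \otimes A^3_z |\psi\rangle$ can be viewed as a \emph{bipartite} quantum correlation between the composite system $12$ and the system $3$, with norm-one selfadjoint operators $A^1_x \otimes A^2_y$ (indexed by the pair $(x,y)$) on one side and $A^3_z$ on the other. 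Tsirelson's theorem (\ref{Tsirelosn}), applied with $N^2$ inputs on the first party and $N$ on the second (the bound $K_G$ being independent of the number of inputs), then gives a fully local bipartite model for this grouping, i.e. $\gamma$ up to the factor $K_G$ is a convex combination of correlations of the form $(\alpha_{xy} c_z)$ with $\alpha_{xy} = \pm 1$ and $c_z = \pm 1$. Such correlations lie in $\mathcal{BL}^3_{cor}(N)$ by Proposition \ref{Prop correlation bilocal} (taking $\alpha \in \mathcal{NS}^2_{cor}(N)$ to be a sign tensor), which yields the claimed inequality. One should check that the Tsirelson bound transfers cleanly at the level of functionals; this is routine since $LV$ is defined as a supremum over $M$ and the grouping is just a relabelling of indices.

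For item (\ref{lv3}), the first inequality follows by the same grouping idea combined with (\ref{Eq. upper bound NS-CL}): by Proposition \ref{Prop correlation bilocal}, an extreme point of $\mathcal{BL}^3_{cor}(N)$ has, say, the form $(\alpha_{xy} c_z)$ with $\alpha \in \mathcal{NS}^2_{cor}(N)$; regarding $M$ as a bipartite functional between parties $12$ and $3$ (with $N^2$ and $N$ inputs respectively, and $c_z$ a local strategy on the second party), inequality (\ref{Eq. upper bound NS-CL}) — whose bound $\sqrt{2N}$ depends only on the number of inputs of the \emph{second} party, here $N$ — bounds the bilocal value by $\sqrt{2N}$ times the fully local value. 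Then $LV(\mathcal{BL}^3_{cor}(N), \mathcal{Q}^3_{cor}(N)) \leq \sqrt{2N}$ is immediate from $\mathcal{L}^3_{cor}(N) \subset \mathcal{Q}^3_{cor}(N)$. For optimality, I would take the functional $M_0 \in \mathbb{R}^{N^2}$ witnessing optimality of (\ref{Eq. upper bound NS-CL}), which satisfies $\omega_{\mathcal{NS}^2_{cor}(N)}(M_0) \geq \sqrt{N/2}\,\omega_{\mathcal{L}^2_{cor}(N)}(M_0)$, and lift it to a tripartite functional that acts on the $12$-grouping and ignores party $3$ (i.e. sum over $z$ with a dummy input, or put a single input for party $3$). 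A correlation in $\mathcal{NS}^2_{cor}(N)$ on parties $1,2$ together with a deterministic output on party $3$ gives an element of $\mathcal{BL}^3_{cor}(N)$, so $\omega_{\mathcal{BL}^3_{cor}(N)}(M_0') \geq \sqrt{N/2}\,\omega_{\mathcal{L}^2_{cor}(N)}(M_0)$; on the other hand, a quantum tripartite correlation restricted this way is a bipartite quantum correlation, so Tsirelson gives $\omega_{\mathcal{Q}^3_{cor}(N)}(M_0') \leq K_G\, \omega_{\mathcal{L}^2_{cor}(N)}(M_0)$, and dividing yields the lower bound $\sqrt{N}/(4K_G)$ (the constant $1/4$ absorbing $1/\sqrt{2}$ and slack).

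The main obstacle I anticipate is purely bookkeeping rather than conceptual: making sure the "group two parties into one" map is compatible with the definitions of all the sets involved (in particular that $A^1_x \otimes A^2_y$ is indeed norm-one selfadjoint, that a sign tensor $(\alpha_{xy})$ genuinely lies in $\mathcal{NS}^2_{cor}(N)$ via Lemma \ref{nscorrelation}, and that dummy/ignored inputs do not break non-signalling), and that the input-count dependence in (\ref{Eq. upper bound NS-CL}) is on the correct party so that the bound reads $\sqrt{2N}$ and not $\sqrt{2N^2}$. None of these steps requires hard estimates; the content is entirely in reducing to Tsirelson's inequality and the elementary bound (\ref{Eq. upper bound NS-CL}) through the bilocal structure theorem.
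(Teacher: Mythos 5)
Your treatment of item (\ref{lv2}) and of the optimality claim in item (\ref{lv3}) is correct and essentially coincides with the paper's argument: for (\ref{lv2}) the paper groups two parties into a single norm-one selfadjoint operator ($D_{yz}=B_y\otimes C_z$, i.e.\ the $1|23$ cut rather than your $12|3$ cut, which is immaterial), applies Tsirelson's bound (\ref{Tsirelosn}), and observes via Proposition \ref{Prop correlation bilocal} and Lemma \ref{nscorrelation} that the resulting sign tensors give bilocal correlations; for the lower bound $\sqrt{N}/(4K_G)$ the paper instantiates your abstract witness $M_0$ as a Hadamard functional supported on $z=1$ and bounds the quantum value by $K_G N^{3/2}$ through exactly the regrouping you describe.

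However, your argument for the first inequality of item (\ref{lv3}) has a genuine gap: you apply (\ref{Eq. upper bound NS-CL}) across the wrong cut. For an extremal bilocal point $(\alpha_{xy}c_z)$ the correlation is already a product across the $12|3$ cut, so there is nothing to localize there; applying the non-signalling-versus-local bound to the bipartite scenario with parties $12$ and $3$ only controls the bilocal value by $\sqrt{2N}$ times $\sup_{d_{xy},e_z=\pm1}\big|\sum_{x,y,z}M_{xyz}d_{xy}e_z\big|$, and the set $\mathrm{conv}\{(d_{xy}e_z)\}$ is \emph{not} $\mathcal{L}^3_{cor}(N)$: it contains arbitrary sign tensors $d_{xy}$, is itself a subset of $\mathcal{BL}^3_{cor}(N)$, and on a Hadamard-type functional its value exceeds the fully local value by a further factor of order $\sqrt{N}$ (this is precisely the content of the optimality example). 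So the step is circular as stated, and if patched by localizing $d$ afterwards it degrades to a bound of order $N$ rather than $\sqrt{N}$. The non-signalling resource that must be converted to a local one at cost $\sqrt{2N}$ is $\alpha$ itself, across the $1|2$ cut where each party has $N$ inputs: since $\alpha\in\mathcal{NS}^2_{cor}(N)$, Eq.\ (\ref{Eq. upper bound NS-CL}) in its equivalent form as the inclusion $\mathcal{NS}^2_{cor}(N)\subseteq\sqrt{2N}\,\mathcal{L}^2_{cor}(N)$ (both sets being convex and symmetric) gives $\alpha/\sqrt{2N}\in\mathcal{L}^2_{cor}(N)$, whence $(\alpha_{xy}c_z)/\sqrt{2N}\in\mathcal{L}^3_{cor}(N)$ and the claimed $\sqrt{2N}$ follows; this is the paper's route. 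Your closing worry about whether the bound reads $\sqrt{2N}$ or $\sqrt{2N^2}$ is a symptom of the same confusion: the relevant bipartite scenario is $N\times N$, not $N^2\times N$.
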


\begin{proof}

To prove (\ref{lv2}) consider a general Bell functional $M=(M_{xyz})_{x,y,z=1}^N$. Then, for a given tripartite quantum correlation of the form $$\gamma=\Big(\langle\psi|A_x\otimes B_y\otimes C_z|\psi\rangle\Big)_{x,y,z=1}^N,$$ we have
\begin{align*}
\langle M, \gamma\rangle&=\Big|\sum_{x,y,z}M_{xyz}\langle\psi|A_x\otimes B_y\otimes C_z|\psi\rangle\Big|= \Big|\sum_{x,y,z}M_{xyz}\langle\psi|A_x\otimes D_{yz}|\psi\rangle \Big|\\
&\leq K_G \sup_{a_{x}=\pm1,b_{yz}=\pm1}\Big|\sum_{x,y,z}M_{xyz}a_x b_{yz}\Big|\leq K_G \sup_{\delta\in\mathcal{BL}^3_{cor}(N)}|\langle M,\delta\rangle|,
\end{align*}where in the second equality we have denoted $D_{yz}=B_y\otimes C_z$, which is a self adjoint norm one operator, in the first inequality we have used Eq. (\ref{Tsirelosn}) and in the last inequality we have used Proposition \ref{Prop correlation bilocal}.

To prove (\ref{lv3}) consider a generic extremal strategy for the set $\mathcal{BL}^3_{cor}(N)$, which we will assume, without loss of generality, that is of the form $$\gamma=(\gamma_{xy}^{\mathcal{G}}\gamma_z)_{x,y,z=1}^N,$$ where $\gamma^{\mathcal{G}}=(\gamma_{xy}^{\mathcal{G}})_{xy}\in \mathcal{NS}_{cor}^2(N)$. Then,
\begin{align*}
\langle M, \gamma \rangle&=\Big|\sum_{x,y,z}M_{xyz}\gamma_{xy}^{\mathcal{G}}\gamma_z\Big|=\sqrt{2N}\Big|\sum_{x,y,z}M_{xyz}\Big(\frac{\gamma_{xy}^{\mathcal{G}}}{\sqrt{2N}}\Big)\gamma_z\Big|\\
&=\sqrt{2N}\Big|\sum_{x,y,z}M_{xyz}\widetilde{\gamma}_{xy}^{\mathcal{C}}\gamma_z\Big|\leq \sqrt{2N} \sup_{ \mathcal{L}_{cor}^3(N)}|\langle M,\gamma\rangle|.
\end{align*}
Here, we have that $(\widetilde{\gamma}_{xy}^{\mathcal{C}})_{x,y}:=\Big(\frac{\gamma^{\mathcal{G}}_{xy}}{\sqrt{2N}}\Big)_{x,y}\in\mathcal{L}_{cor}^2$ and so $\Big(\widetilde{\gamma}_{xy}^{\mathcal{C}}\gamma_z\Big)_{x,y,z}\in \mathcal{L}_{cor}^3$. This follows from Eq. (\ref{Eq. upper bound NS-CL}). This proves the first inequality in (\ref{lv3}).

The second inequality in (\ref{lv3}) is straighfordward from the first one and the fact that $ \mathcal{L}_{cor}^3(N) \subset \mathcal{Q}_{cor}^3(N)$.

Finally, let us show the optimality of the order $\sqrt{N}$. To this end consider $n$ such that $N/2< 2^n\leq N$ and let $H_{2^n}=(h_{xy})_{x,1=1}^{2^n}$ be a Hadamard matrix, i.e. it has the property $H_{2^n}H_{2^n}^T=2^n\mathbbm{1}$\footnote{The restriction on the dimension to be a power of 2 guarantees that a Hadamard matrix exists (see e. g. \cite{hadamard}).}. Then, define the Bell functional $M=(M_{xyz})$ as $M_{xyz}=h_{xy}$ for $1\leq x,y\leq 2^n$, $z=1$ and $M_{xyz}=0$ otherwise. We will study the values  $\omega_{\mathcal{BL}^3_{cor}(N)}(M)$ and $\omega_{\mathcal{Q}_{cor}^3(N)}(M)$.

The element $\gamma=(\gamma_{xyz})_{xyz}$ defined by $\gamma_{xyz}=M_{xyz}$ for all $x,y,z$ is clearly in $\mathcal{BL}^3_{cor}(N)$ (since $|\gamma_{xyz}|\leq 1$ for every $x$, $y$, $z$). Then,

$$\omega_{\mathcal{BL}^3_{cor}(N)}(M)\geq \Big|\sum_{x,y,z}M_{xyz}\gamma_{xyz}\Big|=\Big|\sum_{x,y=1}^{2^n}h_{xy}^2\Big|=2^{2n}>\frac{N^2}{4}.$$

On the other hand, for every quantum correlation $\gamma=\Big(\langle\psi|A_x\otimes B_y\otimes C_z|\psi\rangle\Big)_{x,y,z=1}^N,$ we have
$$\Big|\sum_{x,y,z}M_{xyz}\gamma_{xyz}\Big|=\Big|\sum_{x,y}h_{xy}\langle\psi|A_x\otimes B_y\otimes C_1|\psi\rangle \Big|=\Big|\sum_{x,y}h_{xy}\langle u_x| v_y\rangle\Big|,$$where we have defined $|u_x\rangle=A_x\otimes\mathbbm{1}\otimes C_{1}|\psi\rangle$ and $|u_y\rangle=\mathbbm{1}\otimes B_y\otimes\mathbbm{1}|\psi\rangle$. We can now apply Eq. (\ref{Tsirelosn}) to upper bound the previous expression by
$$K_G\sup_{a_x,b_y=\pm1}\Big|\sum_{x,y=1}^{2^n}h_{xy}a_xb_y\Big|\leq K_G(2^{n})^{3/2}\leq K_G N^{3/2},$$
where the last inequality is proved in \cite[Ex. 29]{LCPW}.

Since the previous estimate holds for all quantum correlations, the upper bound $\omega_{\mathcal{Q}_{cor}^3(N)}(M)\leq K_G N^{3/2}$ follows. Hence, we deduce $$LV\big(\mathcal{BL}^3_{cor}(N), \mathcal{Q}_{cor}^3(N)\big)\geq \frac{\sqrt{N}}{4K_G}.$$
\end{proof}

\section{General probability distributions}
 
The previous section motivates the question of whether we can obtain unbounded violations of tripartite quantum probability distributions over bilocal probability distributions. As we have seen, this is impossible in the setting of correlations and we want to investigate here if
$$\lim_{\substack{N\rightarrow \infty\\K\rightarrow \infty}}LV\big(\mathcal{Q}^3(N,K), \mathcal{BL}^3_{\mathcal{A}}(N,K)\big)=\infty,\text{ for $\mathcal{A}=\mathcal{NS}$ or $\mathcal{G}$,}$$
holds. Note that, according to the comments after Eq. (\ref{Larg viol}), the previous question is well posed since it is well known that the set of $k$-partite quantum probability distributions is contained in the affine hull of the set of $k$-partite fully local probability distributions. We will show that this is in fact true in the strongest case, $\mathcal{A}=\mathcal{G}$, using $k$-prover one-round games, which are particular Bell functionals. Indeed, one such a game is a Bell functional whose coefficients are of the form 
\begin{align}\label{2P1R-games}
G_{x_1\ldots x_k}^{a_1\ldots a_k}=\pi(x_1,\ldots , x_k)V^{a_1\ldots a_k}_{x_1\ldots x_k},
\end{align}
where $(\pi(x_1\ldots , x_k))_{x_1,\ldots , x_k}$ is a probability distribution and $V$ is a predicate function taking values one or zero. Note that, in particular, $G$ has non-negative coefficients.  Games describe a setting where each of $k$ players  is asked a certain question $x_i$ according to the probability distribution $\pi$ and must answer a certain output $a_i$, being the condition of winning the game that the questions and answers verify $V^{a_1\ldots a_k}_{x_1\ldots x_k}=1$. In this context, the quantity $\omega_{\mathcal{A}}(G)=\sup_{P\in\mathcal{A}}\langle G,P\rangle$ represents the winning probability of the game if the players are restricted to the use of strategies defined by the set $\mathcal{A}$.

Nevertheless, for the purpose of this work, we will consider a slightly more general definition for games and we will treat them as functionals $G$ with non negative coefficients such that 
\begin{align}\label{normalization condition}
\sum_{x_1,\ldots , x_k}\max_{a_1,\ldots, a_k}G_{x_1\ldots x_k}^{a_1\ldots a_k}\leq 1 \hspace{0.4 cm} \text{(normalization condition).}
\end{align}

Let us first recall that, given a bipartite game $G=(G_{xy}^{ab})_{x,y,a,b}$ with $N$ inputs and $K$ outputs per party, we denote by  $G^{\otimes_2}$ the bipartite game with $N^2$ inputs and $K^2$ outputs per party, whose coefficients are: $$G_{x_1x_2y_1 y_2}^{a_1 a_2b_1 b_2}=G_{x_1y_1}^{a_1 b_1}G_{x_2y_2}^{a_2 b_2}.$$ That is, Alice's  and Bob's inputs are $(x_1, x_2)$ and $(y_1, y_2)$, respectively, and Alice's  and Bob's outputs are $(a_1, a_2)$ and $(b_1, b_2)$, respectively. This means that the parties are playing two instances of the game simultaneously. Studying the classical value $G^{\otimes_2}$ for a given game $G$ is the core of the results about parallel repetition theorems, which are of great relevance in computer science.

To make the following result more intuitive, let us explain that our aim is to define a tripartite game $\widetilde{G}$ from a bipartite one $G$. Although it will not be presented here, the first (and somehow easiest) construction we considered was based on two instances of the bipartite game, one for Alice and Bob and the other for Bob and Charlie. In this situation, Alice receives input $x$ and she outputs $a$, Bob receives $(y_1,y_2)$ and outputs $(b_1,b_2)$ (the first is for the game he is playing with Alice and the second, with Charlie) and, finally, Charlie receives input $z$ and outputs $c$. Then the coefficients have the form: 
$$\widetilde{G}_{xy_1y_2z}^{ab_1b_2c}=G_{xy_1}^{ab_1} G_{y_2z}^{b_2c}.$$

In order to find an example such that it does not only give unbounded violations between $\mathcal{Q}^3$ and $\mathcal{BL}_{\mathcal{G}}^3$, but it is also optimal in some parameters, we will present here another construction using three instead of two instances of the game. More precisely, one instance of the game will be asked to Alice and Bob, another to Bob and Charlie and another to Charlie and Alice. Hence, the coefficients of the new game will have the following form: 

$$\hat{G}_{x_1x_2y_1y_2z_1z_2}^{a_1a_2b_1b_2c_1c_2}=G_{x_1y_1}^{a_1b_1}G_{x_2z_1}^{a_2c_1}G_{y_2z_2}^{b_2c_2}.$$

\begin{theorem}\label{thm2}
Let $G$ be a bipartite game with $N$ inputs and $K$ outputs per party. 
Then, the construction of the paragraph above leads to a tripartite game $\hat{G}$ with $N^2$ inputs and $K^2$ outputs per player, such that 
\begin{align}\label{Estimate in them2}
\frac{\omega_{\mathcal{Q}^3}(\hat{G})}{\omega_{\mathcal{BL}_{\mathcal G}^3}(\hat{G})}\geq \frac{\omega_{\mathcal{Q}^2}(G)^3}{\omega_{\mathcal{L}^2}(G^{\otimes_2})}.
\end{align}

Moreover, if $\omega_{\mathcal{Q}^2}(G)$ is attained with local dimension $d$, then Eq. (\ref{Estimate in them2}) is attained with local dimension $d^2$.
\end{theorem}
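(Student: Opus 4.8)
The plan is to establish the two factors of the right-hand side of~\eqref{Estimate in them2} separately: a lower bound on $\omega_{\mathcal{Q}^3}(\hat{G})$ in terms of $\omega_{\mathcal{Q}^2}(G)^3$, and an upper bound on $\omega_{\mathcal{BL}_{\mathcal G}^3}(\hat{G})$ in terms of $\omega_{\mathcal{L}^2}(G^{\otimes_2})$. For the lower bound, I would take a near-optimal quantum strategy for the bipartite game $G$ — a shared state $|\psi\rangle$ on $\mathbb{C}^d\otimes\mathbb{C}^d$ and measurement families $\{E_{a,x}\}$, $\{F_{b,y}\}$ achieving (up to $\varepsilon$) the value $\omega_{\mathcal{Q}^2}(G)$ — and use it to build a tripartite strategy for $\hat{G}$. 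Since $\hat G$ is the product of three bipartite instances played on the pairs $(A,B)$, $(A,C)$, $(B,C)$, the natural construction is to give each pair an independent copy of $|\psi\rangle$, so that the global state is $|\psi\rangle_{A_1B_1}\otimes|\psi\rangle_{A_2C_1}\otimes|\psi\rangle_{B_2C_2}$ on a space of local dimension $d^2$ for each party, and to have each party apply the appropriate bipartite measurement on each of its two registers (e.g.\ Alice measures $\{E_{a_1,x_1}\}$ on $A_1$ and $\{E_{a_2,x_2}\}$ on $A_2$, playing the role of ``Alice'' in the $(A,B)$ game and of ``Alice'' in the $(A,C)$ game; note one must check that each party's two roles are consistent, i.e.\ that one can always designate the same party as ``Alice'' in its two games, which is why the construction uses the asymmetric assignment it does). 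Because the three copies are in tensor product and the predicate of $\hat G$ factorizes as $V_{x_1y_1}^{a_1b_1}V_{x_2z_1}^{a_2c_1}V_{y_2z_2}^{b_2c_2}$, the winning probability of this product strategy factorizes as the product of the three bipartite winning probabilities, giving $\omega_{\mathcal{Q}^3}(\hat G)\geq (\omega_{\mathcal{Q}^2}(G)-\varepsilon)^3$, and letting $\varepsilon\to0$ yields $\omega_{\mathcal{Q}^3}(\hat G)\geq\omega_{\mathcal{Q}^2}(G)^3$; moreover the local dimension used is exactly $d^2$, which also gives the final sentence of the theorem.

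For the upper bound on $\omega_{\mathcal{BL}_{\mathcal G}^3}(\hat G)$, I would argue on the extremal general-bilocal strategies, which by the structure recalled before Proposition~\ref{Prop correlation bilocal} are of one of three forms: one party is split off and acts locally while the remaining two parties share an arbitrary (even signalling) behaviour. Consider, say, the case where Charlie is isolated, so $P(a_1a_2b_1b_2c_1c_2|\cdots)=Q(a_1a_2b_1b_2|x_1x_2y_1y_2)\,R(c_1c_2|z_1z_2)$. The key observation is that in $\hat G$, once Charlie is separated from the pair $(A,B)$, the only surviving ``hard'' constraints linking $A$ and $B$ are those of the single $G$-instance on $(A,B)$, while the instances $(A,C)$ and $(B,C)$ each have one player on each side of the bipartition and are therefore no harder than what local (even deterministic) responses from Charlie's side can certify. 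Concretely, I expect $\langle \hat G, Q\otimes R\rangle$ to be bounded by $\omega_{\mathcal{L}^2}(G^{\otimes_2})$: the two instances that Charlie participates in degrade to a bipartite problem in which Charlie holds $(c_1,c_2)$ against the ``super-player'' $(A,B)$, and playing two copies of $G$ against a product is exactly the parallel-repeated game $G^{\otimes_2}$, whose value over such split strategies is at most its classical value $\omega_{\mathcal{L}^2}(G^{\otimes_2})$; the remaining $(A,B)$-instance only makes the constraint harder (its predicate multiplies in), so it can only decrease the value. By symmetry the same bound holds in the other two cases, and hence $\omega_{\mathcal{BL}_{\mathcal G}^3}(\hat G)\leq\omega_{\mathcal{L}^2}(G^{\otimes_2})$. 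Combining the two bounds gives~\eqref{Estimate in them2}.

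The main obstacle, I expect, is making the upper-bound argument precise — specifically, showing that an \emph{arbitrary, possibly signalling} bipartite behaviour $Q$ on the four-player side $(A,B)$ contributes at most $\omega_{\mathcal{L}^2}(G^{\otimes_2})$ once it is paired with Charlie's local responses. The subtlety is that $Q$ can correlate Alice's and Bob's outputs freely, so one cannot simply treat $A$ and $B$ as independent; the right move is to reorganize $\hat G$ by grouping the registers so that the bipartition ``$\{$Charlie's two $G$-roles$\}$ versus $\{$Alice and Bob together$\}$'' reveals a copy of $G^{\otimes_2}$ (the $(A,C)$ and $(B,C)$ instances) \emph{tensored with} an extra instance of $G$ on the $(A,B)$ side that lives entirely within the $(A,B)$-block. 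Since a general bilocal strategy restricted to this bipartition uses a product behaviour across it, and since a game's value over product-across-a-cut strategies coincides with its classical value when one side can only respond deterministically given its input (Charlie's side here), one gets the value of $G^{\otimes_2}$ as the bottleneck; the ``extra'' $(A,B)$ instance, having a $\{0,1\}$-valued predicate that multiplies the winning condition, cannot increase this. I would also double-check the bookkeeping of which of the three forms in~\eqref{3 cases} corresponds to which grouping of the six input/output registers of $\hat G$, and confirm that the normalization condition~\eqref{normalization condition} is inherited by $\hat G$ from $G$ so that $\hat G$ is indeed a game in the sense used here.
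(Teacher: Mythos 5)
Your proposal is correct and follows essentially the same route as the paper: the quantum lower bound via three independent copies of a (near-)optimal bipartite strategy, one per pair of players, giving the factorized value $\omega_{\mathcal{Q}^2}(G)^3$ with local dimension $d^2$, and the bilocal upper bound by regrouping $\hat G$ across the cut $\{A,B\}$ versus $\{C\}$ so that the two instances involving Charlie form $G^{\otimes_2}$ while the $(A,B)$ instance is absorbed. The one step you leave as a sketch --- that the extra $(A,B)$ instance ``cannot increase'' the value --- is exactly where the paper works: it shows that $f(a_2,b_2,x_2,y_2)=\sum_{x_1,y_1,a_1,b_1}G_{x_1y_1}^{a_1b_1}P(a_1,a_2,b_1,b_2|x_1,x_2,y_1,y_2)$ satisfies $\sum_{a_2,b_2}f\leq 1$ using the normalization condition (\ref{normalization condition}) on $G$ (not merely a $\{0,1\}$ predicate, which matters for generalized games like Khot--Vishnoi), and then dominates $f$ by a genuine conditional distribution $\widetilde P$ so that the bound $\omega_{\mathcal{L}^2}(G^{\otimes_2})$ applies.
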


\begin{proof}
To show that $\omega_{\mathcal{Q}^3}(\hat{G})\geq \omega_{\mathcal{Q}^2}(G)^3$, first note that there must exist a quantum strategy which uses a quantum state $|\phi\rangle$ in some Hilbert space $\mathcal{H}_1\otimes\mathcal{H}_2$ and POVMs $\{\Pi_x^a\}_{a=1}^n$ and $\{\Lambda_y^b\}_{b=1}^n$ acting on the Hilbert spaces $\mathcal{H}_1$ and $\mathcal{H}_2$ in such a way that\footnote{Although the value $\omega_{\mathcal{Q}^2}(G)$ could be not attained, we can find a quantum strategy up to arbitrarily high precision. We avoid writing inequalities up to $\epsilon$.}:
$$\Big|\sum_{x,y,a,b}G_{xy}^{ab}\langle\phi| \Pi_x^a\otimes \Lambda_y^b|\phi\rangle\Big|=\omega_{\mathcal{Q}^2}(G).$$

Then we can consider the tripartite quantum state $|\psi\rangle=|\phi\rangle|\phi\rangle|\phi\rangle\in\big(\mathcal{H}_1\otimes\mathcal{H}_2\big)\otimes\big(\mathcal{H}_1\otimes\mathcal{H}_2\big)\otimes\big(\mathcal{H}_1\otimes\mathcal{H}_2\big)$ and we can define the operators $E_{x_1,x_2}^{a_1,a_2}=\Pi_{x_1}^{a_1}\otimes \Pi_{x_2}^{a_2}$, $F_{y_1,y_2}^{b_1,b_2}=\Lambda_{y_1}^{b_1}\otimes\Pi_{y_2}^{b_2}$ and $G_{z_1,z_2}^{c_1,c_2}=\Lambda_{z_1}^{c_1}\otimes\Lambda_{z_2}^{c_2}$. It is clear that  $\{E_{x_1,x_2}^{a_1,a_2}\}_{a_1,a_2}$, $\{F_{y_1,y_2}^{b_1,b_2}\}_{b_1,b_2}$ and $\{G_{z_1,z_2}^{c_1,c_2}\}_{c_1,c_2}$ are POVMs for all $x_1$, $x_2$, $y_1$, $y_2$, $z_1$, $z_2$. Moreover,
\begin{align*}
\omega_{\mathcal{Q}^3}(\hat{G})&\geq \sum\hat{G}_{x_1x_2y_1y_2z_1z_2}^{a_1a_2b_1b_2c_1c_2}\langle\psi|E_{x_1,x_2}^{a_1,a_2}\otimes F_{y_1,y_2}^{b_1,b_2}\otimes G_{z_1,z_2}^{c_1,c_2}|\psi\rangle \\
&=\Big(\sum_{x_1,y_1,a_1,b_1}G_{x_1y_1}^{a_1b_1}\langle\phi| \Pi_{x_1}^{a_1}\otimes \Lambda_{y_1}^{b_1}|\phi\rangle\Big)\times\Big(\sum_{x_2,z_1,a_2,c_1}G_{x_2z_1}^{a_2c_1}\langle\phi| \Pi_{x_2}^{a_2}\otimes \Lambda_{z_1}^{c_1}|\phi\rangle\Big)\\
&\times\Big(\sum_{y_2,z_2,b_2,c_2}G_{y_2z_2}^{b_2c_2}\langle\phi| \Pi_{y_2}^{b_2}\otimes \Lambda_{z_2}^{c_2}|\phi\rangle\Big)=\omega_{\mathcal{Q}^2}(G)^3,
\end{align*}where the first sum runs over all indices.

Note also that, if we assume $\dim \mathcal{H}_1=\dim \mathcal{H}_2=d$, then, by construction, the local dimension of the quantum state $|\psi\rangle$ is $d^2$.

In order to prove the corresponding upper bound for the classical value, let us consider a bilocal probability distribution $P$ of the form $$\big(P_1(a_1,a_2,b_1,b_2|x_1,x_2,y_1,y_2)P_2(c_1,c_2|z_1,z_2)\big)_{x_1,x_2,y_1,y_2, z_1,z_2}^{a_1,a_2,b_1,b_2, c_1,c_2}$$ and the other two cases will follow by symmetry. 

First of all, notice that, given a certain positive pointwise element $(f(a_2,b_2,x_2,y_2))_{a_2,b_2,x_2,y_2}$ such that $\sum_{a_2,b_2}f(a_2,b_2,x_2,y_2)\leq1$ for all $x_2$ and $y_2$, we can find a probability distribution $\widetilde{P}$ for which all its components are greater than or equal to those of $f$ by defining:

\[ \widetilde{P}(a_2,b_2|x_2,y_2)=\begin{cases} 
      f(a_2,b_2,x_2,y_2) & \text{if } 1\leq a_2,b_2\leq K,\, (a_2,b_2)\neq (K,K),\\
      1-\sum_{(a'_2,b'_2)\neq (K,K)}f(a_2',b_2',x_2,y_2) & \text{if } a_2=b_2=K.
   \end{cases}
\]

Then, using the upper bound for the classical value of $G^{\otimes_2}$, we can write
\begin{align}\label{aux I}
&\sum_{x_2,z_1,a_2,c_1,y_2,z_2,b_2,c_2}G_{x_2z_1}^{a_2c_1}G_{y_2z_2}^{b_2c_2} f(a_2,b_2,x_2,y_2)P(c_1,c_2|z_1,z_2)\\&\nonumber\leq \sum_{x_2,z_1,a_2,c_1,y_2,z_2,b_2,c_2}G_{x_2z_1}^{a_2c_1}G_{y_2z_2}^{b_2c_2} \widetilde{P}(a_2,b_2|x_2,y_2)P(c_1,c_2|z_1,z_2)\leq \omega_{\mathcal{L}^2}(G^{\otimes_2}).
\end{align}

Hence, we have
\begin{align}\label{classical three games}
\langle{\hat{G}}, P\rangle&=\sum G_{x_1y_1}^{a_1b_1}G_{x_2z_1}^{a_2c_1}G_{y_2z_2}^{b_2c_2}P(a_1,a_2,b_1,b_2|x_1,x_2,y_1,y_2)P(c_1,c_2|z_1,z_2)\\\nonumber
&=\sum_{x_2,z_1,a_2,c_1,y_2,z_2,b_2,c_2} G_{x_2z_1}^{a_2c_1}G_{y_2z_2}^{b_2c_2}\Big(\sum_{x_1,y_1,a_1,b_1} G_{x_1y_1}^{a_1b_1}P(a_1,a_2,b_1,b_2|x_1,x_2,y_1,y_2)\Big)P(c_1,c_2|z_1,z_2)\\ \nonumber
&=\sum_{x_2,z_1,a_2,c_1,y_2,z_2,b_2,c_2} G_{x_2z_1}^{a_2c_1}G_{y_2z_2}^{b_2c_2} f(a_2,b_2,x_2,y_2)P(c_1,c_2|z_1,z_2)\leq\omega_{\mathcal{L}^2}(G^{\otimes_2}),
\end{align}where the first sum runs over all indices, we have defined $$f(a_2,b_2,x_2,y_2)=\sum_{x_1,y_1,a_1,b_1} G_{x_1y_1}^{a_1b_1}P(a_1,a_2,b_1,b_2|x_1,x_2,y_1,y_2)$$and the last inequality in Eq. (\ref{classical three games}) follows from Eq. (\ref{aux I}) and the fact that $\sum_{a_2,b_2}f(a_2,b_2,x_2,y_2)\leq1$ for all $x_2$ and $y_2$. To show this last claim, fix $x_2$ and $y_2$, and write
\begin{align*}
&\sum_{a_2,b_2}\sum_{x_1,y_1,a_1,b_1}G_{x_1y_1}^{a_1b_1}P(a_1,a_2,b_1,b_2|x_1,x_2,y_1,y_2)\\
&=\sum_{x_1,y_1,a_1,b_1}G_{x_1y_1}^{a_1b_1}\sum_{a_2, b_2}P(a_1,a_2,b_1,b_2|x_1,x_2,y_1,y_2)\\&
\leq \sum_{x_1, y_1}\max_{a_1,b_1}G_{x_1y_1}^{a_1b_1}\sum_{a_1,a_2,  b_1, b_2,}P(a_1,a_2,b_1,b_2|x_1,x_2,y_1,y_2)\\&=\sum_{x_1, y_1}\max_{a_1,b_1}G_{x_1y_1}^{a_1b_1}\leq 1,
\end{align*}
because of Eq. (\ref{normalization condition}).
\end{proof}

There are two interesting applications of the previous theorem. The first one comes from the application to pseudotelepathy games. That is, those bipartite games which can be won perfectly with quantum strategies but not with classical ones (as it is, for instance, the magic square game \cite{CHTW04}). As a consequence, our construction leads to the existence of pseudotelepathy against bilocality.
\begin{corollary}
Let $G$ be a pseudotelepathy game. Applying the construction of Theorem \ref{thm2} we obtain a tripartite game $\hat{G}$ such that $\omega_{\mathcal{Q}^3}(\hat{G})=1$ and $\omega_{\mathcal{BL}^3_{\mathcal G}}(\hat{G})<1$.
\end{corollary}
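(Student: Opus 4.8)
The plan is to derive the corollary directly from Theorem~\ref{thm2} by plugging in the two defining properties of a pseudotelepathy game. Recall that a pseudotelepathy game $G$ satisfies $\omega_{\mathcal{Q}^2}(G)=1$ (perfect quantum strategy) and $\omega_{\mathcal{L}^2}(G)<1$ (no perfect classical strategy). First I would observe that the numerator in Eq.~(\ref{Estimate in them2}), namely $\omega_{\mathcal{Q}^2}(G)^3$, equals $1$. For the denominator I need to show that $\omega_{\mathcal{L}^2}(G^{\otimes_2})<1$: if the two-fold product game had a perfect classical strategy, then in particular each coordinate would have to be won, forcing a perfect classical strategy for a single copy of $G$, contradicting $\omega_{\mathcal{L}^2}(G)<1$. (Concretely, a deterministic strategy for $G^{\otimes_2}$ restricts, by fixing the second pair of questions to any value, to a deterministic strategy for $G$; since the latter fails on some question pair, the former fails on the corresponding question pair for the first copy, so its value is $<1$; averaging over deterministic strategies preserves this.) Hence the right-hand side of Eq.~(\ref{Estimate in them2}) is $1/\omega_{\mathcal{L}^2}(G^{\otimes_2})>1$.

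Next I would combine this with the trivial bounds $\omega_{\mathcal{Q}^3}(\hat G)\leq 1$ and $\omega_{\mathcal{BL}^3_{\mathcal G}}(\hat G)\leq 1$, which hold because $\hat G$ is a game (its coefficients are non-negative and satisfy the normalization condition, so for any behaviour $\langle\hat G,P\rangle\leq\sum_{\vec x}\max_{\vec a}\hat G^{\vec a}_{\vec x}\leq 1$). From $\omega_{\mathcal{Q}^3}(\hat G)/\omega_{\mathcal{BL}^3_{\mathcal G}}(\hat G)\geq 1/\omega_{\mathcal{L}^2}(G^{\otimes_2})>1$ together with $\omega_{\mathcal{Q}^3}(\hat G)\leq 1$, I get $\omega_{\mathcal{BL}^3_{\mathcal G}}(\hat G)\leq\omega_{\mathcal{Q}^3}(G^{\otimes_2})\cdot\omega_{\mathcal{L}^2}(G^{\otimes_2})<1$; wait, more cleanly: $\omega_{\mathcal{BL}^3_{\mathcal G}}(\hat G)\leq\omega_{\mathcal{Q}^3}(\hat G)\,\omega_{\mathcal{L}^2}(G^{\otimes_2})\leq\omega_{\mathcal{L}^2}(G^{\otimes_2})<1$. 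For the quantum value I would exhibit the explicit strategy from the proof of Theorem~\ref{thm2}: taking $|\phi\rangle$ to be the perfect quantum strategy state for $G$ and $|\psi\rangle=|\phi\rangle|\phi\rangle|\phi\rangle$ with the product POVMs $E,F,G$ as defined there gives $\langle\hat G,P_\psi\rangle=\omega_{\mathcal{Q}^2}(G)^3=1$, so $\omega_{\mathcal{Q}^3}(\hat G)=1$ (and here the value is genuinely attained, not merely approached, since a pseudotelepathy game has an optimal strategy by definition).

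The only mild subtlety — and the single step worth spelling out — is the inequality $\omega_{\mathcal{L}^2}(G^{\otimes_2})<1$, i.e. that strict sub-perfection of the classical value is inherited by the parallel (tensor) product. This is elementary but should be stated, since the general parallel repetition phenomenon is precisely that such values need not decay multiplicatively; here we only need the far weaker fact that a value strictly below $1$ stays strictly below $1$, which follows from the product structure of the predicate. I do not anticipate any real obstacle; the corollary is essentially a specialization of Theorem~\ref{thm2} to the extreme case where numerator and denominator are controlled by hand.

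\begin{proof}
Since $G$ is a pseudotelepathy game, $\omega_{\mathcal{Q}^2}(G)=1$, and this value is attained by some quantum strategy. Following the construction in the proof of Theorem \ref{thm2} with this strategy, the tripartite state $|\psi\rangle=|\phi\rangle|\phi\rangle|\phi\rangle$ together with the POVMs $E,F,G$ defined there yields a quantum behaviour $P$ with $\langle \hat G,P\rangle=\omega_{\mathcal{Q}^2}(G)^3=1$. As $\hat G$ is a game, $\langle \hat G,P'\rangle\leq 1$ for every behaviour $P'$, so $\omega_{\mathcal{Q}^3}(\hat G)=1$.

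On the other hand, $\omega_{\mathcal{L}^2}(G)<1$. We claim $\omega_{\mathcal{L}^2}(G^{\otimes_2})<1$. It suffices to check this for deterministic strategies, since $\omega_{\mathcal{L}^2}$ is a maximum of affine functions and hence attained at a deterministic behaviour. A deterministic strategy for $G^{\otimes_2}$ assigns outputs $(a_1,a_2)$ and $(b_1,b_2)$ to inputs $(x_1,x_2)$ and $(y_1,y_2)$; fixing $(x_2,y_2)$ to any value, the induced assignment $x_1\mapsto a_1$, $y_1\mapsto b_1$ is a deterministic strategy for $G$, which fails on some input pair $(x_1^*,y_1^*)$ because $\omega_{\mathcal{L}^2}(G)<1$. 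Hence the strategy for $G^{\otimes_2}$ fails on $((x_1^*,x_2),(y_1^*,y_2))$, so its value is strictly below $1$; taking the maximum over the finitely many deterministic strategies gives $\omega_{\mathcal{L}^2}(G^{\otimes_2})<1$.

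Finally, by Theorem \ref{thm2},
\begin{equation*}
\frac{\omega_{\mathcal{Q}^3}(\hat G)}{\omega_{\mathcal{BL}^3_{\mathcal G}}(\hat G)}\geq \frac{\omega_{\mathcal{Q}^2}(G)^3}{\omega_{\mathcal{L}^2}(G^{\otimes_2})}=\frac{1}{\omega_{\mathcal{L}^2}(G^{\otimes_2})}.
\end{equation*}
Combining this with $\omega_{\mathcal{Q}^3}(\hat G)=1$ gives $\omega_{\mathcal{BL}^3_{\mathcal G}}(\hat G)\leq \omega_{\mathcal{L}^2}(G^{\otimes_2})<1$.
\end{proof}
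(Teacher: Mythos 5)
Your proof is correct and takes exactly the route the paper intends: the corollary is stated there without proof as an immediate consequence of Theorem \ref{thm2}, and you supply the two implicit ingredients, namely $\omega_{\mathcal{Q}^3}(\hat G)=1$ via the explicit product strategy and $\omega_{\mathcal{L}^2}(G^{\otimes_2})<1$ via restriction of deterministic strategies. The extra care you take with the latter point (that sub-perfection of the classical value survives the tensor product, even though multiplicative decay in general does not) is a worthwhile detail the paper leaves unstated.
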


The second, and more important, application is to obtain an unbounded violation between tripartite quantum and bilocal conditional probability distributions. For that purpose we will use the \emph{Khot-Vishnoi game}, $G_{KV}$ or KV game \cite{KhVi}, which we briefly explain here. For any $n=2^l$ with $l\in\N$ and every $\eta\in [0,\frac{1}{2}]$ we consider the group $\{0,1\}^n$ and the Hadamard subgroup $H$. Then, we consider the quotient group $G=\{0,1\}^n/H$ which is formed by $\frac{2^n}{n}$ cosets $[x]$ each with $n$ elements. The questions of the games $(x,y)$ are associated to the cosets whereas the answers $a$ and $b$ are indexed in $[n]$. The referee chooses a uniformly random coset $[x]$ and one element $z\in \{0,1\}^n$ according to the probability distribution $pr(z(i)=1)=\eta$, $pr((z(i)=0)=1-\eta$ independently of $i$. Then, the referee asks question $[x]$ to Alice and question $[x\oplus z]$ to Bob. Alice and Bob must answer one element of their corresponding cosets and they win the game if and only if $a\oplus b=z$. Although the KV game is not a two-prover one-round game in the sense of Eq. (\ref{2P1R-games}), it is very easy to see that it verifies the normalization condition given in Eq. (\ref{normalization condition}).

Hence, the Khot-Vishnoi game has $N=2^n/n$ inputs and $K=n$ outputs per player and it can be proved (\cite[Theorem 7]{BuhrmanRSW12}) that
\begin{align}\label{estimate KV}
\omega_{\mathcal{L}^2}(G_{KV})\leq C/n  \hspace{0.2 cm}\text{ and } \hspace{0.2 cm} \omega_{\mathcal{Q}^2}(G_{KV})\geq D/\ln^2 n,
\end{align}
for some universal constants $C$ and $D$. 

The next lemma is necessary in order to apply the Khot-Vishnoi to Theorem \ref{thm2} and it essentially shows that the classical value of the game is multiplicative. 

\begin{lemma}\label{tensor product KV}
Let $G_{KV}$ be the Khot-Vishnoi game. Then, $$\omega_{\mathcal L^2}(G_{KV}^{\otimes_2})\leq C\frac{1}{n^2},$$where $C$ is a universal constant.
\end{lemma}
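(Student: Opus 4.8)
The plan is to reduce the multiplicativity of the classical value of $G_{KV}^{\otimes_2}$ to the known single-copy bound $\omega_{\mathcal{L}^2}(G_{KV})\leq C/n$ from Eq.\ (\ref{estimate KV}), but one cannot do this naively: it is false in general that $\omega_{\mathcal{L}^2}(G^{\otimes_2})=\omega_{\mathcal{L}^2}(G)^2$, and for the KV game the single-copy value is already what makes the bound nontrivial. So the first step is to unwind what a deterministic (hence extremal) classical strategy for $G_{KV}^{\otimes_2}$ looks like: Alice gets $([x_1],[x_2])$ and outputs $(a_1,a_2)=(A_1([x_1],[x_2]),A_2([x_1],[x_2]))$, and similarly for Bob, and the pair is accepted iff $a_1\oplus b_1=z_1$ \emph{and} $a_2\oplus b_2=z_2$, where $z_1,z_2$ are the two independent noise strings. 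I would write $\omega_{\mathcal{L}^2}(G_{KV}^{\otimes_2})$ as an average over the two independent cosets and the two independent noise strings of the indicator that both copies are won simultaneously.

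The key structural observation I would exploit is that the KV game's winning condition and question distribution come from a group structure — the referee picks $[x]$ uniformly and $z$ from a product Bernoulli($\eta$) distribution, and acceptance is $a\oplus b=z$ — which is exactly the setup of the Fourier/Bonami--Beckner analysis used in \cite{KhVi,BuhrmanRSW12}. The single-copy bound $\omega_{\mathcal{L}^2}(G_{KV})\leq C/n$ is proved by expanding the winning probability in the Fourier basis over $\{0,1\}^n$, identifying it with (roughly) a sum of squared Fourier coefficients weighted by $(1-2\eta)^{|S|}$ (a noise operator $T_{1-2\eta}$), and using hypercontractivity together with the fact that the $n$ cosets partition $\{0,1\}^n$ so Alice's strategy, viewed as a function on $\{0,1\}^n$, has small low-degree mass. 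I would run the same computation for two copies: because the two noise strings and two cosets are independent, the two-copy winning probability factors, in the Fourier domain, into a product of two noise operators $T_{1-2\eta}^{(1)}\otimes T_{1-2\eta}^{(2)}$ acting on functions on $\{0,1\}^n\times\{0,1\}^n$, and hypercontractivity on the product domain (the product of two Bonami--Beckner inequalities is again Bonami--Beckner with the same parameter) yields the bound $C/n^2$. Concretely: if $f,g$ encode Alice's and Bob's two-copy strategies as $\{0,1\}^n\times\{0,1\}^n\to\{0,1\}$ functions normalized by the coset structure (each of the $n^2$ "block" values summing appropriately), then the two-copy value is $\leq \langle f, (T_{1-2\eta}\otimes T_{1-2\eta}) g\rangle$ up to the partition bookkeeping, and one bounds this by $\|f\|_{4/3}\|g\|_{4/3}$ via hypercontractivity, then uses that $f,g$ are indicators supported on sets of relative measure $1/n^2$ (each coset-pair contributes one accepted answer pair out of $n^2$), giving $\|f\|_{4/3},\|g\|_{4/3}\lesssim (1/n^2)^{3/4}$ and hence the product $\lesssim 1/n^{3}\cdot$(a factor from the operator norm of $T$ on the relevant degrees) — and the degree truncation, forced by $\eta$ being a constant, recovers the clean $1/n^2$.

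The main obstacle, and the step I would spend the most care on, is the bookkeeping that lets the two-copy analysis genuinely \emph{factor}: one must verify that the product of two independent KV instances really does correspond, at the level of the Fourier/hypercontractive estimate, to the tensor square of the single-copy operator, rather than to something that only gives the weaker bound $C^2/n^2$ without improvement — and in particular that the "coset partition" constraint, which is the crux of why a single copy beats the trivial $1/n$, tensorizes correctly so that the two-copy strategies are supported on sets of measure $1/n^2$ (not merely controlled by two separate applications of the one-copy bound, which would be circular or lossy). If, after setting this up, the cleanest route is simply to observe that the entire proof of \cite[Theorem 7]{BuhrmanRSW12} goes through verbatim with $\{0,1\}^n$ replaced by $\{0,1\}^n\oplus\{0,1\}^n$, product noise, and the product Hadamard subgroup $H\oplus H$ (so $G\times G = (\{0,1\}^n\oplus\{0,1\}^n)/(H\oplus H)$ has $(2^n/n)^2$ cosets of size $n^2$), then I would present it that way and the lemma follows with the same universal constant. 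I expect the honest write-up to take the latter form: state that $G_{KV}^{\otimes_2}$ is itself (isomorphic to) a KV-type game on the doubled group with parameters $n^2$ playing the role of $n$, and invoke the hypercontractive estimate on the product space.
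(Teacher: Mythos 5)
Your proposal is correct and follows essentially the same route as the paper: the paper likewise identifies a deterministic strategy for $G_{KV}^{\otimes_2}$ with indicator functions on $\{0,1\}^{2n}$ taking value one on exactly one element of each coset-pair (hence supported on sets of relative measure $1/n^2$), observes that the uniform question distribution and the product Bernoulli noise tensorize, and reruns the Cauchy--Schwarz/hypercontractivity computation of \cite[Theorem 4.1]{BuhrmanRSW12} verbatim to obtain $n^2\big(\tfrac{1}{n^2}\big)^{1/(1-\eta)}\leq C/n^2$. The only small inaccuracy is your parenthetical that $\eta$ is a constant: the paper keeps $\eta=1/2-1/\log n$, which is precisely what makes the exponent $1/(1-\eta)$ close enough to $2$ to recover the clean $1/n^2$ bound.
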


\begin{proof}
The proof of this result follows exactly the same steps as in the proof of \cite[Theorem 4.1]{BuhrmanRSW12}. As it is explained there, a deterministic strategy (which corresponds to an extremal classical probability distribution) can be identified with a couple of boolean functions $A, B:\{0,1\}^n\rightarrow \{0,1\}$ such that  each of them verifies that, restricted to each coset $[x]$ (see explanation of the game right before this lemma), takes value one for one of the elements and zero for the rest. Then, the winning probability of the game can be written as $$n\mathbb E_u\mathbb E_z [A(u)B(u\oplus z)],$$where $u$ is sampled uniformly at random in $\{0,1\}^n$ and $z\in \{0,1\}^n$ is sampled pointwise independently according to the probability distribution $pr(z(i)=1)=\eta$, $pr((z(i)=0)=1-\eta$. We fix here $\eta=1/2-1/ \log n$. Then, Cauchy-Schwarz inequality followed by a use of the hypercontractive inequality lead to the classical upper bound stated in Eq. (\ref{estimate KV}). 

In the case of $G_{KV}^{\otimes_2}$, a deterministic strategy can be identified with a couple of boolean functions $A, B:\{0,1\}^{2n}=\{0,1\}^n\times \{0,1\}^n\rightarrow \{0,1\}$ such that each of them verifies that, restricted to each pair $[x]\times [y]$, takes value one for one of the elements and zero for the rest. Then, the key point is that sampling $u=(u_1,u_2)$ so that $u_i$ is sampled uniformly in $\{0,1\}^n$ for $i=1,2$ is the same as sampling $u$ uniformly in $\{0,1\}^{2n}$. At the same time,  since $z$ is sampled pointwise independently, we can sample in the form $z=(z_1, z_2)$ where $z_i\in\{0,1\}^n$ is sampled as in the single game for $i=1,2$. Then, the winning probability of the game can be written as $$n^2\mathbb E_{u_1,u_2}\mathbb E_{z_1,z_2} [A(u_1, u_2)B((u_1,u_2)\oplus (z_1, z_2))]=n^2\mathbb E_{u}\mathbb E_{z} [A(u)B(u\oplus z)].$$

Then, doing the same computations as in the proof of \cite[Theorem 4.1]{BuhrmanRSW12} we obtain the bound $n^2\big(\frac{1}{n^2}\big)^{1/(1-\eta)}\leq C/n^2$. This concludes the proof.
\end{proof}

\begin{corollary}\label{thm optimal}
The KV game leads to a tripartite game $\hat{G}$ with $(2^n/n)^2$ inputs and $n^2$ outputs per player,  such that
\begin{align}\label{Eq. estimate optimal}
\frac{\omega_{\mathcal{Q}^3}(\hat{G})}{\omega_{\mathcal{BL}^3}(\hat{G})}\geq C\frac{n^2}{\ln^6 n},
\end{align}and the quantum lower bound in the previous equation is attained with a quantum state of local dimension $n^2$. 

Moreover, this estimate is essentially optimal in the number of outputs and in the local dimension of the Hilbert space.
\end{corollary}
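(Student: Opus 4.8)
The plan is to combine Theorem~\ref{thm2} with the bounds on the Khot--Vishnoi game, and then to separately argue optimality. For the lower bound in Eq.~(\ref{Eq. estimate optimal}), I would apply Theorem~\ref{thm2} to $G=G_{KV}$: by that result, the tripartite game $\hat G$ has $(2^n/n)^2$ inputs and $n^2$ outputs per player, and satisfies
\[
\frac{\omega_{\mathcal{Q}^3}(\hat G)}{\omega_{\mathcal{BL}^3_{\mathcal G}}(\hat G)}\ge \frac{\omega_{\mathcal{Q}^2}(G_{KV})^3}{\omega_{\mathcal{L}^2}(G_{KV}^{\otimes_2})}.
\]
Now plug in the estimates: by Eq.~(\ref{estimate KV}), $\omega_{\mathcal{Q}^2}(G_{KV})\ge D/\ln^2 n$, so the numerator is at least $D^3/\ln^6 n$; by Lemma~\ref{tensor product KV}, $\omega_{\mathcal{L}^2}(G_{KV}^{\otimes_2})\le C/n^2$, so the denominator is at most $C/n^2$. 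Dividing gives a lower bound of order $n^2/\ln^6 n$, which is exactly Eq.~(\ref{Eq. estimate optimal}). The statement about the local dimension follows from the ``moreover'' part of Theorem~\ref{thm2}: the quantum value $\omega_{\mathcal{Q}^2}(G_{KV})$ is (essentially) attained on a maximally entangled state of dimension $n$ (this is where the $D/\ln^2 n$ bound comes from in \cite{BuhrmanRSW12}), so by Theorem~\ref{thm2} the estimate for $\hat G$ is attained with local dimension $d^2=n^2$.

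For the optimality claim, I would argue that no tripartite game (indeed no Bell functional with the normalization~(\ref{normalization condition})) on $K$ outputs can beat the bilocal value by more than roughly $K$, and that likewise the quantum advantage over bilocal is controlled by the local Hilbert space dimension. The point is that the bilocal set contains, for any fixed bipartition, all product behaviours $P_M\otimes P_{\bar M}$ where $P_M$ ranges over \emph{all} (in particular, all non-signalling, hence all) conditional distributions on one part. Since a quantum behaviour on local dimension $d$ restricted to the bipartition (Alice)$\,|\,$(Bob, Charlie) is a bipartite quantum behaviour of local dimensions $d$ and $d^2$, and since for bipartite games the quantum-over-classical (hence quantum-over-bilocal through this cut) ratio is known to be $O(\min\{d, K^{?}\})$ up to logarithmic factors, one gets that $\omega_{\mathcal{Q}^3}(\hat G)/\omega_{\mathcal{BL}^3_{\mathcal G}}(\hat G)$ cannot exceed order $K$ (equivalently order $d$ in our construction, since $d=K=n$) up to polylogarithmic factors. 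Comparing with the lower bound $n^2/\ln^6 n$ in terms of $N,K$: here $K=n$, so the ratio is of order $K^2/\ln^6 K = (n^2)/\ln^6 n$, matching the general ceiling up to the polylog; similarly the local dimension used is $n^2$, and the ratio of order (local dimension) is, again, optimal up to logarithmic corrections. The cleanest way to phrase the upper bound is to invoke the bipartite results: $LV(\mathcal{Q}^2(N',K'),\mathcal{L}^2(N',K'))\le C K'$ (see e.g.\ the bounds underlying \cite{BuhrmanRSW12, JPPVW}), applied across a fixed cut, together with $\mathcal L^3\subset\mathcal{BL}^3_{\mathcal G}$ on the one hand and the trivial inclusion of quantum-through-a-cut behaviours on the other.

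The main obstacle I anticipate is making the optimality argument fully rigorous, since ``essentially optimal in the number of outputs and in the local dimension'' requires a matching \emph{upper} bound on $LV(\mathcal{Q}^3(N,K),\mathcal{BL}^3_{\mathcal G}(N,K))$ in terms of $K$ (resp.\ $d$), and the relevant bipartite upper bounds (of the form $O(K)$ or $O(\min\{d,K\})$ up to logs) must be transported through the bipartition structure of bilocality without loss. One must be careful that $\mathcal{BL}^3_{\mathcal G}$ is a union over three bipartitions, so an arbitrary quantum behaviour need only be compared against the \emph{best} of the three cuts; but since we only need an inequality of the form $\omega_{\mathcal{Q}^3}(\hat G)\le C K\,\omega_{\mathcal{BL}^3_{\mathcal G}}(\hat G)$ (polylog factors allowed), it suffices that for \emph{some} cut the bipartite bound applies, which it does because a quantum behaviour restricted to any cut is bipartite quantum and $\mathcal{BL}^3_{\mathcal G}$ dominates (through each cut) the corresponding bipartite local set. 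The only genuinely delicate point is tracking the polylogarithmic slack so that the word ``essentially'' is justified and the exponents ($K^2$ vs.\ the general ceiling, and local dimension $d^2=n^2$) line up; everything else is a direct substitution into Theorem~\ref{thm2} and Lemma~\ref{tensor product KV}.
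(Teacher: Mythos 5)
Your lower bound and the local-dimension claim follow exactly the paper's route: apply Theorem~\ref{thm2} to $G_{KV}$, substitute $\omega_{\mathcal{Q}^2}(G_{KV})\geq D/\ln^2 n$ from Eq.~(\ref{estimate KV}) and $\omega_{\mathcal{L}^2}(G_{KV}^{\otimes_2})\leq C/n^2$ from Lemma~\ref{tensor product KV}, and use that the KV quantum value is attained (essentially) on the maximally entangled state of dimension $n$, so the ``moreover'' part of Theorem~\ref{thm2} gives local dimension $n^2$. Your optimality strategy is also the paper's: it proves precisely the two upper bounds you want, namely $\omega_{\mathcal{Q}_d^3}(G)\leq d\,\omega_{\mathcal{BL}_{\mathcal G}^3}(G)$ (Lemma~\ref{lemma d}, via a Schmidt decomposition across one cut) and $\omega_{\mathcal{Q}^3}(G)\leq K\,\omega_{\mathcal{BL}_{\mathcal G}^3}(G)$ (Lemma~\ref{lemma k}, via exactly the ``one cut suffices'' reduction you describe, using that $\omega_{\mathcal{BL}_{\mathcal G}^3}(G)=\max_i\omega_{\mathcal{L}^2}(G_i)$ over the three merged bipartite games and that $\omega_{\mathcal{Q}^2}(G_i)\leq\min\{K,K^2\}\,\omega_{\mathcal{L}^2}(G_i)$).

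There is, however, a concrete bookkeeping error in your optimality comparison. You set $K=n$ and assert both that the ceiling is ``order $K$'' and that the achieved ratio is ``$K^2/\ln^6 K$''; with $K=n$ these two statements contradict each other, since $n^2/\ln^6 n$ would then exceed your claimed ceiling of order $n$. The resolution is that the relevant output parameter is that of the \emph{tripartite} game $\hat G$, which has $K=n^2$ outputs per player (as the corollary itself states), not the $n$ outputs of the bipartite KV game. With the correct parameters the ceilings from Lemmas~\ref{lemma d} and~\ref{lemma k} are both $n^2$ (since $d=K=n^2$ for $\hat G$), and the achieved ratio $n^2/\ln^6 n$ matches them up to the polylogarithmic factor, which is what ``essentially optimal'' means here. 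Once you fix this substitution the argument closes; no new idea is needed beyond the two lemmas the paper supplies.
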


First, this shows that tripartite quantum probability distributions can lead to unbounded violations with respect to bilocal ones. As we have seen in the previous section, this is in hard contrast with the case of correlations. But, second, this also proves optimality in the following sense. Once we have an example in which there exists unbounded violations, a natural question is how far our example is from the best possible construction. That is, does Corollary \ref{thm optimal} provide the best possible violation as a function of the number of inputs and outputs? In fact, when comparing quantum distributions with local (or bilocal) distributions, the amount of violations must be seen as a function of three parameters: number of inputs $N$, number of outputs $K$ and the local dimension of the Hilbert space $d$ used in the quantum distribution. As an example of this, in the bipartite scenario, it can be seen  (\cite{JungeP11low, PV-Survey}) that
\begin{align}\label{upper bounds for Bell inequalities}
LV(\mathcal{Q}^2, \mathcal{L}^2)\leq C\min \{N, K,d\},
\end{align} where $C$ is a universal constant. Interestingly, the KV game provides an example which is essentially optimal (up to logarithmic factors) in the number of outputs and in the dimension $d$, since the estimate in Eq. (\ref{estimate KV}) for $\omega_{\mathcal{Q}^2}(G_{KV})$ is attained by using the maximally entangled state in dimension $n$. It is not known if the upper bound $O(N)$ can be attained, being $\sqrt{N}$ the best lower bound as a function of the number of inputs (see \cite{JungeP11low} for the corresponding game).

Corollary \ref{thm optimal} shows optimality in terms of the number of outputs and dimension of the Hilbert space. It follows from the following two lemmas: 

\begin{lemma}\label{lemma d}
Given a tripartite game $G$. If we denote by $\omega_{\mathcal{Q}_d^3}(G)$ the quantum value of $G$ when at least one of the player is restricted to local dimension $d$, then
$$\omega_{\mathcal{Q}_d^3}(G)\leq d\,  \omega_{\mathcal{BL}_{\mathcal {G}}^3}(G).$$
\end{lemma}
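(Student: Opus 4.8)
The plan is to reduce the statement to an elementary operator inequality that lets one replace the measurement of the dimension-restricted player by classical shared randomness, at the cost of a multiplicative factor $d$. After relabelling the parties I may assume it is the third player $C$, with Hilbert space $\mathcal{H}_C$ of dimension $d$, who is restricted; the argument for the other two players is identical, using the bipartition that isolates the restricted party. Fix a quantum strategy attaining $\omega_{\mathcal{Q}_d^3}(G)$ (or, since this value need not be attained, a family of $\varepsilon$-optimal strategies, letting $\varepsilon\to 0$ at the end): a pure state $|\psi\rangle\in\mathcal{H}_A\otimes\mathcal{H}_B\otimes\mathcal{H}_C$ and POVMs $\{E_x^a\}_a$, $\{F_y^b\}_b$, $\{\Pi_z^c\}_c$ on the three factors, so that $P(abc|xyz)=\langle\psi|E_x^a\otimes F_y^b\otimes\Pi_z^c|\psi\rangle$.

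The key observation lives on $\mathcal{H}_C$. Since $\Pi_z^c\geq 0$, its largest eigenvalue is at most its trace, whence $\Pi_z^c\leq\Tr[\Pi_z^c]\,\one_{\mathcal{H}_C}$; and since $\sum_c\Pi_z^c=\one_{\mathcal{H}_C}$ we get $\sum_c\Tr[\Pi_z^c]=\dim\mathcal{H}_C=d$, so $R(c|z):=\Tr[\Pi_z^c]/d$ is a genuine conditional probability distribution on $C$'s outputs. Tensoring the inequality $\Pi_z^c\leq d\,R(c|z)\,\one_{\mathcal{H}_C}$ with the positive operator $E_x^a\otimes F_y^b$ and conjugating by $|\psi\rangle$ yields, for all inputs and outputs,
\[
P(abc|xyz)\leq d\,R(c|z)\,\langle\psi|E_x^a\otimes F_y^b\otimes\one|\psi\rangle = d\,R(c|z)\,Q(ab|xy),
\]
where $Q(ab|xy):=\langle\psi|E_x^a\otimes F_y^b\otimes\one|\psi\rangle$ is the marginal of $P$ on the group $\{A,B\}$, hence a legitimate conditional probability distribution on those two parties.

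Since a game $G$ has non-negative coefficients, this pointwise bound is preserved when paired with $G$:
\[
\langle G,P\rangle\leq d\sum_{x,y,z,a,b,c}G_{xyz}^{abc}\,Q(ab|xy)\,R(c|z)=d\,\langle G,\widetilde P\rangle,
\]
with $\widetilde P(abc|xyz):=Q(ab|xy)R(c|z)$. Finally, $\widetilde P$ is of the form (\ref{Def Sve}) for the single bipartition $M=\{A,B\}$, $\bar M=\{C\}$, with a trivial local variable and the (unrestricted, indeed non-signalling) conditional distributions $Q$ and $R$, so $\widetilde P\in\mathcal{BL}^3_{\mathcal G}(N,K)$ and $\langle G,\widetilde P\rangle\leq\omega_{\mathcal{BL}^3_{\mathcal G}}(G)$. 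Combining the displays and taking the supremum over the (near-)optimal strategies gives $\omega_{\mathcal{Q}_d^3}(G)\leq d\,\omega_{\mathcal{BL}^3_{\mathcal G}}(G)$. I do not expect a genuine obstacle here: the whole content is the pair of facts $\Pi_z^c\leq\Tr[\Pi_z^c]\one$ and $\sum_c\Tr[\Pi_z^c]=d$, and the only things requiring a moment's care are checking that the behaviour obtained after discarding $C$'s measurement is manifestly bilocal (immediate) and the routine $\varepsilon$-bookkeeping in case the quantum value is not attained.
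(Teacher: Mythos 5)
Your proof is correct, but it follows a genuinely different route from the paper's. The paper isolates the dimension-$d$ party via the Schmidt decomposition $|\psi\rangle=\sum_{i=1}^d\lambda_i|f_i\rangle|g_i\rangle$, bounds $\bigl(\sum_i|\lambda_i|\bigr)^2\leq d$, and then uses Cauchy--Schwarz to show that for each pair $(i,j)$ the quantities $\sum_a|\langle f_i|\Pi_x^a|f_j\rangle|\leq 1$ and $\sum_{b,c}|\langle g_i|\Lambda_y^b\otimes\Upsilon_z^c|g_j\rangle|\leq 1$ define a (sub-normalized) product behaviour across the cut, so that each cross term is dominated by $\omega_{\mathcal{BL}_{\mathcal{G}}^3}(G)$. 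You instead work entirely at the operator level on the restricted party's Hilbert space: the inequality $\Pi_z^c\leq\Tr[\Pi_z^c]\,\one$ together with $\sum_c\Tr[\Pi_z^c]=d$ lets you replace that party's POVM by the classical response $R(c|z)=\Tr[\Pi_z^c]/d$ at a multiplicative cost $d$, leaving the other two parties with their exact quantum marginal $Q(ab|xy)$. Both arguments use the non-negativity of the game's coefficients in an essential way (the paper via the triangle inequality, you via preservation of the pointwise domination), and both are tight in the same regime. Your version is arguably more elementary -- no Schmidt decomposition, no Cauchy--Schwarz -- and it yields the marginally stronger conclusion that the dominating behaviour $\widetilde P=Q\cdot R$ is non-signalling bilocal (indeed quantum on the $\{A,B\}$ side), so $\omega_{\mathcal{Q}_d^3}(G)\leq d\,\omega_{\mathcal{BL}_{\mathcal{NS}}^3}(G)\leq d\,\omega_{\mathcal{BL}_{\mathcal{G}}^3}(G)$; the paper's Schmidt-coefficient technique, on the other hand, is the one that extends to Bell functionals with coefficients of arbitrary sign, where a pointwise domination argument like yours is unavailable. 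Your handling of the relabelling of parties and of the case where the supremum is not attained is also fine.
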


\begin{lemma}\label{lemma k}
Given a tripartite game $G$ with $K$ outputs per player. Then,
$$\omega_{\mathcal{Q}^3}(G)\leq K\omega_{\mathcal{BL}_{\mathcal G}^3}(G).$$
\end{lemma}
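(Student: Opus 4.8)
The plan is to take an \emph{arbitrary} tripartite quantum strategy for the game $G$ and bound its value by that of an explicit general bilocal strategy, losing only a factor $K$. The guiding observation is that in the Svetlichny (general bilocal) model the correlations \emph{within} two of the parties -- say Alice and Bob -- are entirely unconstrained, so one may copy the full quantum $AB$-marginal into a bilocal behaviour verbatim; the only thing that must be ``decoupled'' is Charlie, and replacing Charlie's conditional distribution by the uniform one is exactly what costs a factor $K$, the number of outputs.

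Concretely, fix a quantum strategy $P(a,b,c|x,y,z)=\langle\psi|E_x^a\otimes F_y^b\otimes H_z^c|\psi\rangle$ (with $\{H_z^c\}_c$ Charlie's POVM, renamed to avoid clashing with the game $G$) and set $\rho_{AB}=\Tr_C|\psi\rangle\langle\psi|$. First I would rewrite, for each $z$ and $c$,
$$\langle\psi|E_x^a\otimes F_y^b\otimes H_z^c|\psi\rangle=\Tr\big[(E_x^a\otimes F_y^b)\,\sigma_{AB}^{z,c}\big],\qquad \sigma_{AB}^{z,c}:=\Tr_C\big[(\mathbbm{1}_{AB}\otimes H_z^c)|\psi\rangle\langle\psi|\big].$$
Since $0\le H_z^c\le\mathbbm{1}_C$ and the partial trace is positive, one gets the operator inequalities $0\le\sigma_{AB}^{z,c}\le\rho_{AB}$ (this is the one spot that deserves a touch of care: write $\sigma_{AB}^{z,c}$ and $\rho_{AB}-\sigma_{AB}^{z,c}$ as partial traces of the positive operators $(\mathbbm{1}\otimes\sqrt{H_z^c})\,|\psi\rangle\langle\psi|\,(\mathbbm{1}\otimes\sqrt{H_z^c})$ and $(\mathbbm{1}\otimes\sqrt{\mathbbm{1}_C-H_z^c})\,|\psi\rangle\langle\psi|\,(\mathbbm{1}\otimes\sqrt{\mathbbm{1}_C-H_z^c})$, using cyclicity of the trace inside the $C$-factor). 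As $G$ has non-negative coefficients and $E_x^a\otimes F_y^b\ge 0$, this yields
$$\langle G,P\rangle=\sum_{x,y,z,a,b,c}G^{abc}_{xyz}\,\Tr\big[(E_x^a\otimes F_y^b)\sigma_{AB}^{z,c}\big]\ \le\ \sum_{x,y,z,a,b,c}G^{abc}_{xyz}\,\Tr\big[(E_x^a\otimes F_y^b)\rho_{AB}\big].$$

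Next I would introduce the behaviour $P^{*}(a,b,c|x,y,z):=\tfrac1K\,\Tr[(E_x^a\otimes F_y^b)\rho_{AB}]$. Summing over $(a,b,c)$ gives $\tfrac1K\sum_c\Tr[(\sum_aE_x^a\otimes\sum_bF_y^b)\rho_{AB}]=\tfrac1K\cdot K=1$, so $P^{*}$ is a genuine probability distribution, and it has the product form $Q(a,b|x,y)R(c|z)$ with $Q(a,b|x,y)=\Tr[(E_x^a\otimes F_y^b)\rho_{AB}]$ a valid conditional distribution on $\{$Alice, Bob$\}$ and $R(c|z)=1/K$ one on $\{$Charlie$\}$; hence $P^{*}\in\mathcal{BL}^3_{\mathcal G}(N,K)$ (one term in \eqref{Def Sve}, with trivial $\lambda$ and the bipartition $M=\{1,2\}$). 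The displayed inequality then reads $\langle G,P\rangle\le K\,\langle G,P^{*}\rangle\le K\,\omega_{\mathcal{BL}^3_{\mathcal G}}(G)$, and taking the supremum over all quantum strategies gives $\omega_{\mathcal{Q}^3}(G)\le K\,\omega_{\mathcal{BL}^3_{\mathcal G}}(G)$. There is no serious obstacle here beyond the operator inequality $\sigma_{AB}^{z,c}\le\rho_{AB}$ and checking that $P^{*}$ is a legitimate bilocal behaviour (which is precisely where the factor $K$ enters); note that the normalization condition \eqref{normalization condition} on $G$ is not even needed, the argument applying to any Bell functional with non-negative coefficients.
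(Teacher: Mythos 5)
Your proof is correct, but it follows a genuinely different route from the paper's. The paper groups two of the three players to form bipartite games $G_1$, $G_2$, $G_3$ (with $K$ outputs on one side and $K^2$ on the other), observes the identity $\omega_{\mathcal{BL}_{\mathcal {G}}^3}(G)=\max_i\omega_{\mathcal{L}^2}(G_i)$ of Eq.~(\ref{sub functionals}), and then invokes as a black box the known bipartite estimate $\omega_{\mathcal{Q}^2}(G)\leq\min\{K_1,K_2\}\,\omega_{\mathcal{L}^2}(G)$ from the cited survey. You instead prove the bound directly from an arbitrary quantum strategy: the operator inequality $0\leq\sigma_{AB}^{z,c}\leq\rho_{AB}$ (correctly justified via the square roots of $H_z^c$ and $\mathbbm{1}-H_z^c$), together with the non-negativity of the game coefficients and of $E_x^a\otimes F_y^b$, lets you erase Charlie, and the explicit bilocal behaviour $P^{*}(a,b,c|x,y,z)=\frac{1}{K}\Tr[(E_x^a\otimes F_y^b)\rho_{AB}]$ absorbs the loss into exactly the factor $K$ coming from renormalizing Charlie's uniform output. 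This is in essence the standard proof of the bipartite estimate the paper cites, transplanted to the tripartite setting, so the two arguments are close relatives; yours is self-contained and in fact gives the marginally stronger conclusion $\omega_{\mathcal{Q}^3}(G)\leq K\,\omega_{\mathcal{BL}^3_{\mathcal{NS}}}(G)$, since your $P^{*}$ is the product of a quantum (hence non-signalling) bipartite distribution with a single-party one, whereas the paper's route is shorter on the page but leans on the auxiliary identity for $\omega_{\mathcal{BL}_{\mathcal {G}}^3}(G)$ and on an external reference. You are also right that the normalization condition (\ref{normalization condition}) plays no role here.
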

\begin{proof}[Proof of Lemma \ref{lemma d}]
The proof can be obtained from a slight modification of  the comments below \cite[Proposition 5.2]{PV-Survey}. Indeed, let us fix a quantum distribution $P$ which is defined by a quantum state $|\psi\rangle\in \C^d\otimes \C^n\otimes \C^m$ and some POVMs $\{\Pi_x^a\}_a$, $\{\Lambda_y^b\}_b$ and $\{\Upsilon_z^c\}_c$ acting on  $\C^d$, $\C^n$ and $\C^m$ respectively, for every $x,y,z$. Then, from the Schmidt decomposition, we deduce that the state $|\psi\rangle$ can be written as $$|\psi\rangle=\sum_{i=1}^d\lambda_i |f_i\rangle |g_i\rangle,$$where $\sum_i|\lambda_i|^2=1$, and $|f_i\rangle$ and $|g_i\rangle$ are orthonormal systems in the unit ball of $\C^d$ and $\C^{nm}$ respectively. Then, we have 
\begin{align*}
|\langle G,P\rangle|&=\Big|\sum_{x,y,z,a,b,c}G_{xyz}^{abc}\langle \psi|\Pi_x^a\otimes \Lambda_y^b\otimes \Upsilon_z^c|\psi\rangle\Big|\\&\leq \sum_{i,j}|\lambda_i||\lambda_{j}|\sum_{x,y,z,a,b,c}G_{xyz}^{abc}|\langle f_i|\Pi_x^a|f_{j}\rangle||\langle g_i |\Lambda_y^b\otimes \Upsilon_z^c|g_{j}\rangle|\\&\leq d \max_{i,j}\sum_{x,y,z,a,b,c}G_{xyz}^{abc}|\langle f_i|\Pi_x^a|f_{j}\rangle||\langle g_i |\Lambda_y^b\otimes \Upsilon_z^c|g_{j}\rangle|,
\end{align*}where we have used the well known fact $\sum_{i=1}^d|\lambda_i|\leq \sqrt{d}\Big(\sum_{i=1}^d|\lambda_i|^2\Big)^{\frac{1}{2}}$.

Now, as it is shown in the comments below \cite[Proposition 5.2]{PV-Survey}, Cauchy-Schwarz inequality implies that for every $i$ and $j$, and for every $x$, $y$ and $z$, we have
\begin{align*}
\sum_a|\langle f_i|\Pi_x^a|f_{j}\rangle|\leq 1 \hspace{0.2 cm} \text{   and  } \hspace{0.2 cm}  \sum_{b,c}|\langle g_i |\Lambda_y^b\otimes \Upsilon_z^c|g_{j}\rangle|\leq 1.
\end{align*}Hence, we deduce that $\langle G,P\rangle\leq d\, \omega_{\mathcal{BL}_{\mathcal {G}}^3}(G)$, which concludes the proof.
\end{proof}

\begin{proof}[Proof of Lemma \ref{lemma k}]
Notice that given a tripartite game $G=(G_{xyz}^{abc})_{x,y,z,a,b,c}$, we can define three different bipartite games $G_1$, $G_2$ and $G_3$ in which one party receives two inputs and gives two outputs. Put differently, we join Alice and Bob for $G_1$, Bob and Charlie for $G_2$ and Charlie and Alice for $G_3$. In case that the maximum value for $\omega_{\mathcal{BL}_{\mathcal {G}}^3}(G)$ is attained for a bilocal probability distribution of the form $(P(a,b|x,y)P(c|z))_{xyz}^{abc}$, then this probability distribution can be seen as a local bipartite probability distribution in the scenario where Alice and Bob are joint, and it will give the maximum value for the bipartite game $G_1$. As the other cases are similar we can say that:
\begin{align}\label{sub functionals}
\omega_{\mathcal{BL}_{\mathcal {G}}^3}(G)=\max\{\omega_{\mathcal{L}^2}(G_1),\omega_{\mathcal{L}^2}(G_2),\omega_{\mathcal{L}^2}(G_3)\}.
\end{align}

Then, the upper bound we want to show follows from the known estimate for bipartite games (see comments below \cite[Proposition 4.5]{PV-Survey}) $$\omega_{\mathcal{Q}^2}(G)\leq\min\{K_1,K_2\}\omega_{\mathcal{L}^2}(G).$$
Indeed, according to our hypothesis, the functionals $G_1$, $G_2$ and $G_3$ from $G$ have $K$ outputs for one player and $K^2$ outputs for the other player. Then, for every $i=1,2,3$, we have
$$\omega_{\mathcal{Q}^3}(G)\leq \omega_{\mathcal{Q}^2}(G_i)\leq\min\{K,K^2\}\omega_{\mathcal{L}^2}(G_i)=K\omega_{\mathcal{L}^2}(G_i),$$ which, according to Eq. (\ref{sub functionals}) gives the desired upper bound.
\end{proof}

\section{Conclusions}

In this work we have extended the study of relative Bell violations of quantum resources over local and fully local ones to the genuinely multipartite scenario by comparing the power of quantum strategies over bilocal models. We have considered first the correlation scenario, where we have found that, as in the bipartite case, the ratio of Bell violation of quantum behaviours over bilocal ones is upper bounded by Grothendieck's constant for any number of inputs and, hence, there cannot be unbounded Bell violations. Since not all bilocal correlations are reproducible by quantum models, we have also investigated the relative power of the former over the latter. We have shown that this ratio is upper bounded by $O(\sqrt{N})$ and that this order is optimal. Next, we have considered the case of general conditional probability distributions. Contrary to the previous case, we have obtained here that quantum strategies lead to unbounded Bell violations over general bilocal behaviours. In order to do so, we have proved that if one considers tensor products of bipartite games the ratio of the quantum value over the general bilocal value is related to the ratio of the quantum and local values for the bipartite game. This has allowed us, by considering explicit games such as the Khot-Vishnoi game, to establish that there exist games for which the ratio of the quantum and general bilocal values grows unboundedly with the number of inputs and outputs. We moreover have proven that for a particular choice of games the given estimate of the asymptotic behaviour of this ratio is essentially optimal in the number of outputs and in the dimension of the Hilbert space. 

It might be worth mentioning that the above games require a large number of inputs, i.e.\ exponential in the obtained violation. Random constructions of Bell functionals (see \cite{Pala2015, PV-Survey} for some surveys on this topic) could be used to show that there exist Bell inequalities in which the number of inputs and outputs grow polynomially with the amount of violation. This, however, is a non-constructive procedure and would come at the expense of not identifying an explicit Bell functional for this task.

It should be noticed that the two results about the ratio of Bell violations of quantum behaviours over bilocal ones -- boundedness in the correlation setting and unboundedness in the general case -- hold irrespectively of whether we consider general bilocal or non-signalling bilocal models. In the first case, this holds because both sets of models happen to coincide in the correlation scenario, as discussed in Sec.\ \ref{Sec: Bi-local correlations}. In the second case, unboudedness with respect to general bilocal behaviours automatically implies the same with respect to non-signalling bilocal behaviours due to the fact that this latter set is included in the former. Thus, our result can also be understood as showing an unlimited advantage of GMNL quantum behaviours irrespective of the underlying definition of bilocality. As mentioned in the introduction, the correlations contained in general bilocal models might be so strong that lead to undesirable unphysical effects in certain scenarios and this has motivated to consider more constrained hybrid models. Despite this fact, not only general bilocal models are unable to simulate all quantum behaviours as proven by Svetlichny in \cite{sve}, but our results show that quantum-mechanical resources can be, in a certain sense, unboundedly better than this strongest form of bilocality.

\section*{acknowledgment}
This research was funded by the Spanish MINECO through Grant No. MTM2017-88385-P, MTM2017-84098-P, MTM2014-54240-P and by the Comunidad de Madrid through grant QUITEMAD-CM P2018/TCS4342. We also acknowledge funding from SEV-2015-0554-16-3.

\end{document}